\documentclass[a4paper,UKenglish,cleveref, autoref]{lipics-v2019}
\title{Sparse Hopsets in Congested Clique
} 

\titlerunning{}

\author{Yasamin Nazari}{Johns Hopkins University, Baltimore, MD, United States}{ynazari@jhu.edu}{}{}


\authorrunning{Y. Nazari}

\Copyright{Yasamin Nazari}

\ccsdesc[500]{Theory of computation~Distributed algorithms}

\keywords{Hopsets, Congested Clique, Shortest Paths, Massively Parallel Computation}

\category{}

\relatedversion{}

\supplement{}

\funding{This work is supported in part by NSF awards CCF-1464239 and CCF-1909111.}

\acknowledgements{The author would like to thank Michael Dinitz and Goran Zuzic for helpful discussions. }

\nolinenumbers 

\hideLIPIcs  

\EventEditors{Pascal Felber, Roy Friedman, Seth Gilbert, and Avery Miller}
\EventNoEds{4}
\EventLongTitle{23rd International Conference on Principles of Distributed Systems (OPODIS 2019)}
\EventShortTitle{OPODIS 2019}
\EventAcronym{OPODIS}
\EventYear{2019}
\EventDate{December 17--19, 2019}
\EventLocation{Neuch\^{a}tel, Switzerland}
\EventLogo{}
\SeriesVolume{153}
\ArticleNo{26}

\bibliographystyle{plain}
\usepackage[ruled,vlined,linesnumbered]{algorithm2e}
\usepackage{algorithmic}

\begin{document}
\maketitle
\begin{abstract}
We give the first Congested Clique algorithm that computes a sparse hopset with polylogarithmic hopbound in polylogarithmic time. Given a graph $G=(V,E)$, a $(\beta,\epsilon)$-hopset $H$ with "hopbound" $\beta$, is a set of edges added to $G$ such that for any pair of nodes $u$ and $v$ in $G$ there is a path with at most $\beta$ hops in $G \cup H$ with length within $(1+\epsilon)$ of the shortest path between $u$ and $v$ in $G$.
 Our hopsets are significantly sparser than the recent construction of Censor-Hillel et al.~\cite{censor2019}, that constructs a hopset of size $\tilde{O}(n^{3/2})$\footnote{The $\tilde{O}(f(n))$ notation is used to hide $O(\log (f(n)))$ factors.}, but with a smaller polylogarithmic hopbound. On the other hand, the previously known construction of sparse hopsets with polylogarithmic hopbound in the Congested Clique model, proposed by Elkin and Neiman \cite{elkin2017, elkin2016, elkin2019RNC}, all require polynomial rounds. 
 
  One tool that we use is an efficient algorithm that constructs an $\ell$-limited neighborhood cover, that may be of independent interest. 
 Finally, as a side result, we also give a hopset construction in a \textit{variant} of the low-memory Massively Parallel Computation model, with improved running time over existing algorithms.
\end{abstract}

\section{Introduction}
In the Congested Clique model of distributed computing, we are given a graph with $n$ nodes, where all nodes can send a (possibly different) message with $O(\log(n))$-bits to \textit{every} other node in the graph in each round. In the context of distributed graph algorithms, the input graph is a subset of the communication graph. In addition to theoretical interest in this model, it has also recently gained a lot of attention, due to its connections to practical distributed and big data platforms such as MapReduce (e.g.~\cite{hegeman2015}) and related platforms such as Spark and Hadoop (e.g.~\cite{behnezhad2018}). 

Distance problems, such as single-source shortest path (SSSP) and multi-source shortest path (MSSP), have been widely studied in different models. A fundamental structure that has been used for solving these problems is a hopset. Given a graph $G=(V,E)$, a $(\beta,\epsilon)$-hopset $H$ with hopbound $\beta$, is a set of edges added to $G$ such that for any pair of nodes $u$ and $v$ in $G$, there is a path with at most $\beta$ hops in $G \cup H$ with length within $(1+\epsilon)$ of the shortest path between $u$ and $v$ in $G$. The approximation ratio is also referred to as distortion or \textit{stretch}. We generally want to have sparse hopsets with small hopbound. Intuitively, a hopset can be seen as adding a number of "shortcut edges" that serve as reducing the graph diameter at the expense of a small loss of accuracy. Once a hopset is preprocessed, we can use it as many times as needed for distance queries, and the query time will be the hopbound. 

There is a natural tradeoff between the size and the hop-bound (or the query time) of a hopset. In an extreme-case one could store the complete adjacency list-or equivalently add $O(n^2)$ edges, and then query distance in constant time. Other than the fact that computing all-pairs shortest-path is generally slow, we often do not have enough space to store the whole adjacency list for large-scale graphs.
 There is a line of work that focuses on designing data structures with small size, say $\tilde{O}(n^{1+1/k})$, in which distances can be estimated up to $O(k)$ stretch in small query time. Examples of such structures are Thorup-Zwick distance oracles \cite{thorup2005} or $k$-spanners. Hopsets offer a different tradeoff: a hopset gives more accuracy of $1+\epsilon$ (rather than $O(k)$) at the expense of larger query time (polylogarithmic instead of a small constant). It is therefore crucial to keep the hopbound as small as possible, since the hopbound will basically determine the query time and is more important than preprocessing time. However, even in centralized settings there are existential limitations in this tradeoff. There is a lower bound argument by \cite{abboud2018} stating that there are graphs for which we can not have a hopbound of $o(\log(k)/\epsilon)^{\log(k)}$ and size $O(n^{1+1/k)})$, for arbitrary $0<\epsilon<1$.

In a recent result, Censor-hillel et al.~\cite{censor2019} gave a fast Congested Clique algorithm that constructs a hopset with hopbound $O(\log^2(n)/\epsilon)$ and size $\tilde{O}(n^{3/2})$. While we can use their hopsets to compute distances efficiently, one shortcoming of such a construction is the large space. In particular, if the original graph has size $o(n^{3/2})$, we would be storing more edges than the initial input. This is undesirable due to the large scale nature of data in modern distributed platforms. It is therefore natural to find algorithms that use less space, possibly in exchange for a slightly weaker hopbound (but still polylogarithmic). This is our main goal in this paper.
We extend the result of \cite{censor2019} by constructing sparse hopsets with size $\tilde{O}(n^{1+1/k})$ for a constant $k\geq 2$ and polylogarithmic hopbound in polylogarithmic time in Congested Clique. This is the first Congested Clique construction of sparse hopsets with polylogarithmic hopbound that uses only \textit{polylogarithmic} number of rounds. This implies that we can store a sparse auxiliary data structure that can be used later to query distances (from multiple sources) in polylogarithmic time. 

Our hopset construction is based on a combination of techniques used in Cohen \cite{cohen2000} (with some modifications) and the centralized construction of Huang and Pettie \cite{huang2019}. We also use another result of \cite{censor2019} that allows us to efficiently compute $(1+\epsilon)$-approximate multi-source shortest path distances from $O(\sqrt{n})$ sources.

One tool that we use in our construction is a hop-limited neighborhood cover construction, which may be of independent interest. Roughly speaking, a $W$-neighborhood cover is a collection of clusters, such that there is a cluster that contains the neighborhood of radius $W$ around each node, and such that each node overlaps with at most $O(\log(n))$ clusters. In an $\ell$-limited $W$-neighborhood cover only balls with radius $W$ using paths with at most $\ell$-hops are contained in a cluster.

We note that many of the techniques we use in our construction are borrowed from the PRAM literature. We hope that this paper provides some insight into connections between these different but relevant models.

\subsection{Our contribution} \label{sec:results_intro}

\subparagraph*{Congested Clique.}The state-of-the-art construction of sparse hopsets in Congested Clique is the results of Elkin and Neiman \cite{elkin2019RNC} (and similar results in \cite{elkin2017, elkin2016}), but these algorithms requires polynomial number of rounds for constructing hopsets with polylogarithmic hopbound. The construction of Censor-Hillel et al.~\cite{censor2019} is a special case of hopsets of \cite{elkin2017}. They construct hopsets of size $\tilde{O}(n^{3/2})$ with $O(\log^2(n)/\epsilon)$ hopbounds. They can construct such a hopset in $O(\log^2(n)/\epsilon)$ rounds of Congested Clique using sparse matrix multiplication techniques. However, \cite{censor2019} does not give any explicit results for constructing sparser hopsets. It is possible that their techniques will also lead to faster Congested Clique algorithms for constructing general (sparse) hopsets of \cite{elkin2017, elkin2019RNC}. But here we use a new hopset construction that has a very different structure than hopsets of \cite{elkin2017} and with improved guarantees. Not only does our hopset construction run in polylogarithmic rounds, but it also yields a better size and hopbound tradeoff over the state-of-the-art Congested Clique construction of \cite{elkin2019RNC}. Prior to \cite{elkin2016} and \cite{censor2019} the hopsets proposed for Congested Clique had superpolylogarithmic hopbound of $2^{\tilde{O}(\sqrt{\log(n)})}$ \cite{henzinger2016} or polynomial \cite{nanongkai2014} hopbound.
 More formally, we provide an algorithm with the following guarantees:
\begin{theorem} \label{thm:main}
Given a weighted\footnote{For simplicity, we assume that the weights are polynomial. This assumption can be relaxed using standard reductions that introduce extra polylogarithmic factors in time (and hopbound) (e.g. see \cite{klein1997},\cite{miller2015}, \cite{elkin2016}).} graph $G=(V, E,w)$, for any $k \geq 2, 0< \epsilon \leq 1$, there is a Congested Clique algorithm that computes a $(\beta,\epsilon)$-hopset of size $O(n^{1+\frac{1}{2k}} \log(n) +n \log^2(n))$ with hopbound $\beta={ O(\frac{\log^2(n)}{\epsilon}(\frac{\log(n)\log(k)}{\epsilon})^{\log(k+1)-1})}$ with high probability in $O(\beta \log^2(n))$ rounds. 
\end{theorem}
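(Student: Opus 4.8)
The plan is to build the hopset in two phases, following the Cohen-style parallel/scaling paradigm combined with the Huang--Pettie recursive construction, and to implement each primitive efficiently in Congested Clique. The first step is to reduce the general weighted case to a sequence of $O(\log n)$ \emph{scales}: for scale $i$ we only care about pairs whose $(1+\epsilon)$-approximate distance lies in $[2^i, 2^{i+1})$, so after adding the hopset edges of smaller scales, every such pair is connected by an approximate shortest path using only $\mathrm{poly}(\log n, 1/\epsilon)$ hops of bounded-weight edges. Standard weight-scaling (the references \cite{klein1997,miller2015,elkin2016} cited in the footnote) lets us assume polynomially bounded integer weights, so the number of scales is $O(\log n)$ and handling them costs only a polylogarithmic overhead. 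Within a single scale the task becomes: given that distances of interest are in a bounded range $W$, add a sparse set of shortcut edges so that those distances are preserved up to $(1+\epsilon)$ with a small hop count.

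The core of each scale is the \emph{recursive} Huang--Pettie-type construction, which is where the $(\log(k+1)-1)$ in the exponent of $\beta$ comes from. We pick a hierarchy of sampled node sets $V = A_0 \supseteq A_1 \supseteq \cdots \supseteq A_{k'}$ with $A_{i+1}$ obtained by including each node of $A_i$ independently with probability $n^{-1/(2k)}$ (roughly), so that $|A_i| \approx n^{1 - i/(2k)}$ and the top level has $O(\sqrt n)$ nodes. The hopset edges at level $i$ connect a node to the \emph{nearby} nodes of $A_{i+1}$ within its appropriately-defined cluster; the sparsity bound $O(n^{1+1/(2k)}\log n + n\log^2 n)$ follows from a standard argument bounding, for each node, the expected number of level-$i$ members of $A_i$ before the first member of $A_{i+1}$ (a geometric random variable with mean $n^{1/(2k)}$), summed over the $O(\log n)$ sampling levels. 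The recursion is on $k$: the distances needed to define clusters at one level are themselves computed approximately using the hopset already built for the next coarser level, and the hopbound doubles (hence $\log(k+1)$ many doublings) with each level of recursion, each doubling multiplying the $(\log n \log k/\epsilon)$ factor.

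To make all of this run in polylogarithmic rounds of Congested Clique, I would rely on two algorithmic engines. The first is the $\ell$-limited $W$-neighborhood cover construction promised earlier in the paper: I use it to partition, at each scale and each recursion level, the graph into clusters of bounded hop-diameter such that every relevant ball is contained in some cluster and every node lies in only $O(\log n)$ clusters --- this is what lets the shortcut edges be both few and sufficient. The second is the result of \cite{censor2019} for computing $(1+\epsilon)$-approximate multi-source shortest paths from $O(\sqrt n)$ sources in polylogarithmic rounds; since the top sampling level has only $O(\sqrt n)$ nodes, a single invocation of this primitive per scale/level gives all the distance information needed to define clusters and select hopset edges. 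Running a bounded-hop (i.e., $\beta$-hop) Bellman--Ford-style relaxation on top of the current graph-plus-hopset, iterated over the $O(\log n)$ scales and $O(\log k)$ recursion levels, yields the claimed $O(\beta \log^2 n)$ round complexity. The high-probability correctness comes from a union bound over the $O(\log n)$ sampling events and the guarantee of the neighborhood-cover construction.

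The main obstacle, and the part requiring the most care, is controlling the interaction between the \emph{hop-limited} nature of everything and the recursion: the neighborhood cover is only $\ell$-limited, so a ball of true radius $W$ is only guaranteed to be captured if it is reachable within $\ell$ hops, yet the hop count available at recursion level $i$ is itself the hopbound $\beta_i$ being built, which depends on lower levels. I would resolve this by a careful inductive invariant asserting that after processing levels $0,\dots,i$ at all smaller scales, every pair at distance $\le W$ has a $(1+O(i\epsilon))$-approximate path with at most $\beta_i$ hops, where $\beta_i$ satisfies the recurrence $\beta_{i+1} = O(\beta_i \cdot \log n \log k / \epsilon)$ seeded by $\beta_0 = O(\log^2 n/\epsilon)$; unwinding this recurrence over $\log(k+1)$ levels gives exactly the stated hopbound, and rescaling $\epsilon \leftarrow \epsilon / \Theta(\log k)$ at the start absorbs the accumulated $(1+O(\epsilon))^{\log k}$ stretch into a single $(1+\epsilon)$ factor.
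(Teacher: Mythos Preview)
Your proposal diverges from the paper's actual construction in a fundamental way, and the divergence creates a genuine gap.

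The paper does \emph{not} implement the Huang--Pettie sampling hierarchy distributedly. Instead, for each distance scale it builds an $\ell$-limited $W$-pairwise cover and then splits the resulting clusters by size at the threshold $\sqrt{n}$. For each \emph{small} cluster (size $<\sqrt{n}$, hence $O(n)$ edges), all incident edges are shipped to the cluster center in $O(1)$ rounds via Lenzen's routing, and the center computes a Huang--Pettie hopset \emph{locally}, as a purely sequential computation on a graph it now holds entirely in memory. For \emph{big} clusters (of which there are at most $O(\sqrt{n})$), the paper adds star edges from each center and a clique among the centers, with weights obtained from the $O(\sqrt{n})$-source MSSP primitive of \cite{censor2019}. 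The $(\log(k+1)-1)$ exponent in $\beta$ is simply inherited from the \emph{centralized} Huang--Pettie guarantee applied inside each small cluster; there is no distributed recursion on $k$ at all. The size bound comes from a bucketing argument over small-cluster sizes, not from a Thorup--Zwick-style geometric argument on the sampling hierarchy.

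Your scheme, by contrast, tries to run the sampling hierarchy $A_0\supseteq A_1\supseteq\cdots$ globally and to use the MSSP primitive to compute the distances that define the clusters at each level. The gap is that the MSSP result of \cite{censor2019} only supports $O(\sqrt{n})$ sources, which matches only your \emph{top} sampling level $A_{k'}$. For every intermediate level $A_i$ with $|A_i|\gg\sqrt{n}$ you still need, for every node, its distance to the nearest member of $A_{i+1}$ (and to the relevant members of $A_i$ in its bunch). You give no mechanism for computing these distances in polylogarithmic rounds, and this is precisely the bottleneck that the paper sidesteps by collapsing each small cluster to a single machine. Your inductive recurrence on $\beta_i$ also does not obviously reproduce the stated bound, but that is secondary: the missing ingredient is the Lenzen-routing trick that localizes the Huang--Pettie construction and removes the need for any distributed multi-level sampling.
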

To comapre this with the \textit{efficient variant} of the Congested Clique hopsets of \cite{elkin2016}, we note that for a hopset of size $O(n^{1+\frac{1}{k}} \log(n))$, for a constant $k$, our hopbound is \\$O(\frac{\log^2(n)}{\epsilon}(\frac{\log(n)\log(k)}{\epsilon})^{\log(k+1)-2})$, whereas \cite{elkin2016} gets a hopbound of $\Omega((\frac{\log(n)\log(k)}{\epsilon})^{\log(k)+2})$. Thus our hopbound guarantee is a factor of $\Theta(\log^{2-o(1)}(n) (\frac{\log(k)}{\epsilon})^{4-o(1)})$ improvement over construction \cite{elkin2016}. Also, their more efficient algorithm runs in $\tilde{O}(n^{\rho})$ rounds of Congested Clique, $0<\rho \leq \frac{1}{2}$ is a parameter that impacts the hopbound ($\rho$ is a constant when their hopbound is polylogarithmic). They have another algorithm that uses an extra polynomial factor in running time to obtain constant hopbound\footnote{If we allow extra polynomial factors in the running time we may also get a constant hopbound (we would need to change the parameters of the neighborhood cover construction, and change how we iteratively use smaller scales). However this is inconsistent with our main motivation of getting polylogarithmic round complexity.}. We note that construction of \cite{elkin2019RNC} has similar guarantees as \cite{elkin2016}, with two differences: it eliminates a $\log(n)$ factor (or more generally the dependence on aspect ratio) in the hopset size, but has a slightly worse hopbound in their fastest regime.

 Our construction is mainly based on ideas of \cite{cohen2000} with few key differences that take advantage of power of Congested Clique. While hopsets of \cite{elkin2016} significantly improve over hopset of \cite{cohen2000} in the centralized settings, construction of \cite{cohen2000} has certain properties that makes it adaptable for a better Congested Clique algorithm. In particular, \cite{cohen2000} uses a notion of small and big clusters, and we can utilize this separation in Congested Clique. We change the algorithm of \cite{cohen2000}, in such a way that leads to adding fewer edges for small clusters. This leads to sparser hopsets and improves the overall round complexity. The key idea is that by using the right parameter settings, in Congested Clique we can send the whole topology of a small cluster to a single node, \textit{the cluster center}, and then compute the best known hopset locally. It is possible to perform these operation specifically in Congested Clique due to a well-known routing algorithm by Lenzen \cite{lenzen2013routing}. We can then combine Theorem \ref{thm:main} with a \textit{source detection} algorithm by \cite{censor2019} (formally stated in Lemma \ref{lem:MSSP}) to get the following result for computing multiple-source shortest path queries.
\begin{corollary} \label{cor:mssp}
Given a weighted graph $G=(V, E,w)$ there is a Congested Clique algorithm that constructs a data structure of size $O(m+n^{1+\frac{1}{2k}})$ in $O(\beta \log^2(n))$ rounds, where $\beta= O(\frac{\log^2(n)}{\epsilon}(\frac{\log(n)\log(k)}{\epsilon})^{\log(k+1)-1})$, such that after construction in $O(\beta)$ rounds we can query $(1+\epsilon)$-stretch distances from $O(\sqrt{n})$ sources to all nodes in $V$ with high probability.
\end{corollary}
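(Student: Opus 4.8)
The plan is to combine the hopset from Theorem~\ref{thm:main} with the multi-source source-detection primitive of~\cite{censor2019} stated in Lemma~\ref{lem:MSSP}. First I would invoke Theorem~\ref{thm:main} to construct, in $O(\beta \log^2(n))$ rounds of Congested Clique, a $(\beta,\epsilon')$-hopset $H$ with $\epsilon' = \epsilon/c$ for a suitable constant $c$ (rescaling $\epsilon$ by a constant factor so that the final stretch is exactly $1+\epsilon$ after the standard argument that composes hopset distortion with the approximate shortest-path computation). The hopset has size $O(n^{1+\frac{1}{2k}}\log(n) + n\log^2(n))$, and together with the $m$ edges of $G$ this gives the claimed data structure of size $O(m + n^{1+\frac{1}{2k}})$ (absorbing the $\log$ factors into the $\tilde O$/$O$ convention used in the statement, or noting $n\log^2 n = O(n^{1+1/2k})$ for constant $k$). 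The key structural property we carry forward is that in $G \cup H$, every pair of nodes has a $(1+\epsilon')$-approximate shortest path using at most $\beta$ hops.

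Next, for the query phase, I would apply the source-detection / bounded-hop MSSP algorithm of Lemma~\ref{lem:MSSP} to the graph $G \cup H$ with the given set $S$ of $O(\sqrt n)$ sources and hop-parameter $\beta$. Since $G \cup H$ still has $O(m + n^{1+1/2k})$ edges (in particular $O(n^{3/2})$ when $k \ge 1$, or more generally within the regime where the source-detection primitive of~\cite{censor2019} runs efficiently), one round of this primitive costs $O(1)$ Congested Clique rounds per hop, so running it for $\beta$ hops takes $O(\beta)$ rounds. The output is, for every source $s \in S$ and every $v \in V$, the value $d^{(\beta)}_{G \cup H}(s,v)$, the length of the shortest $s$-$v$ path in $G \cup H$ using at most $\beta$ hops. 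By the hopset guarantee this equals $d_{G \cup H}(s,v) = d_G(s,v)$ up to a $(1+\epsilon')$ factor, hence up to $(1+\epsilon)$ after rescaling; this establishes correctness of the query. The failure probability is inherited from Theorem~\ref{thm:main} (the hopset holds w.h.p.) and from Lemma~\ref{lem:MSSP}, so the whole statement holds with high probability.

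I anticipate the only genuine subtlety is bookkeeping rather than a real obstacle: one must check that the edge count of $G \cup H$ stays within the regime for which the Lemma~\ref{lem:MSSP} source-detection routine achieves $O(1)$ rounds per hop (this is why the corollary is phrased with $O(\sqrt n)$ sources, matching the $O(\sqrt n)$-source guarantee of~\cite{censor2019}), and that hopset edges are treated as ordinary weighted edges by that routine. A second minor point is the $\epsilon$-rescaling: since the hopset already provides $(1+\epsilon')$ paths and the bounded-hop MSSP computes these path lengths \emph{exactly} (it is an exact $\beta$-hop shortest path, not an approximation), no further error is introduced, so in fact we may simply take $\epsilon' = \epsilon$ and the composition is lossless. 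I would state the proof in this order: (1) build the data structure via Theorem~\ref{thm:main}; (2) bound its size; (3) run Lemma~\ref{lem:MSSP} on $G \cup H$ for $\beta$ hops; (4) argue correctness via the hopset property; (5) collect round complexity and the high-probability bound.
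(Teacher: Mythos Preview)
Your proposal is correct and follows the same approach as the paper: construct the hopset of Theorem~\ref{thm:main}, then run the $\ell$-hop source-detection primitive of~\cite{censor2019} on $G\cup H$ with $\ell=\beta$ to obtain exact $\beta$-hop distances from the $O(\sqrt{n})$ sources, which by the hopset guarantee are $(1+\epsilon)$-approximations of the true distances. The paper's own argument is the one-line observation under ``Application to multi-source queries''; your write-up simply fleshes out the bookkeeping (size accounting, the fact that source detection is exact so no extra $\epsilon$-rescaling is needed) that the paper leaves implicit. One labeling caveat: the relevant primitive is the \emph{source-detection} lemma (exact $\ell$-hop distances in $O(\ell)$ rounds), not the approximate-MSSP statement that carries the label \texttt{lem:MSSP}; the paper itself conflates the two references, and you correctly describe the bounded-hop routine even while citing that label.
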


In a related result, the problem of single-source shortest path (SSSP) in Congested Clique was also studied by \cite{becker2017}, where they use continuous optimization techniques for solving transshipment. Firstly their algorithm takes a large polylogarithmic round complexity, and has a high dependence on $\epsilon$. But we can have a significantly smaller running time depending on the hopset size. In other words, for hopsets with a reasonable density (e.g. with size $n^{1+\mu}$, where $\mu<0.1$) we get a much smaller polylogarithmic factor for computing $(1+\epsilon)$-SSSP. This can be further reduced if we allow denser hopsets.

More importantly, an approach such as \cite{becker2017} is mainly suited for SSSP. The limitation with their approach is that for computing multiple distance queries we need to repeat the algorithm for each query. For example, for computing the shortest path from $s$ sources to all nodes, we have to repeat the whole algorithm $s$ times. But constructing a hopset will let us run many of such queries in parallel in $O(\beta+s)$ rounds, where $\beta$ is the hopbound. Moreover, we can compute multi-source shortest path from $O(\sqrt{n})$ sources in parallel for all the sources using the source detection algorithm of \cite{censor2019}.

\subparagraph*{Neighborhood and Pairwise Covers.} In Section \ref{sec:covers}, we focus on an efficient construction of a \textit{limited} pairwise cover (or neighborhood cover) in the CONGEST model, which is a tool that we use in our hopset construction.
Given a weighted graph $G=(V,E)$, a $W$-pairwise cover, as defined by \cite{cohen1998}, is a collection $\mathcal{C}$ of subsets of $V$ with the following properties.
1) the diameter of each cluster is $O(W \log n)$,
2) $\sum_{C \in \mathcal{C}} |C| = \tilde{O}(n)$, $\sum_{C \in \mathcal{C}} E(C)=\tilde{O}(m)$. In other words the sum of sizes of all clusters is $O(n)$, and sum of all edge occurrences in the clusters is $\tilde{O}(m)$,
3) for every path $p$ with (weighted) length at most $W$, there exists a cluster $C$ where $p \subseteq C$.   

Pairwise covers are similar to neighborhood covers of Awerbuch and Peleg \cite{awerbuch1990} with two differences: in a $W$-neighborhood cover, there must be a cluster that contains the neighborhood of radius $W$ around each node rather than only paths of length $W$. Neighborhood covers also need an additional property that \textit{each node} is in at most $O(\log(n))$ clusters. While for our purposes the path covering property is enough (rather than neighborhoods), in distributed settings we need the second property that each node overlaps with few clusters to ensure that there is no congestion bottleneck. The main subtlety in constructing a general $W$-pairwise (or neighborhood) cover is that we would have to explore paths of $\Omega(n)$ hops, and thus it is not clear how this can be done in polylogarithmic time. To resolve this, \cite{cohen1998} proposed a relaxed construction called \textit{$\ell$-limited $W$-pairwise cover}. This structure has all the above properties but only for paths with at most $\ell$-hops. More formally, the third property will be relaxed to require that for every path $p$ of weight at most $W$ with at most $\ell$ hops there exists a cluster $C$ where $p \subseteq C$. We can define an $\ell$-limited $W$-neighborhood cover similarly.

A randomized algorithm that constructs $\ell$-limited pairwise covers with high probability in $O(\ell)$ depth in the PRAM model was given by \cite{cohen1998}. The ideas used in \cite{miller2015} for constructing work-efficient PRAM hopsets, can also be used to construct $\ell$-limited pairwise and neighborhood covers in PRAM. However, they do not explicitly construct limited pairwise covers.
In distributed settings, a recent construction for \textit{sparse} neighborhood covers in \textit{unweighted graphs} in the CONGEST model was given by \cite{parter2019}. However, even by generalizing their result to weighted graphs, in order to cover distances for large values of $W$ the algorithm would take $\tilde{\Omega}(W)$ rounds for reasons described above.

To the best of our knowledge, an efficient algorithm for constructing $\ell$-limited pairwise-covers (or limited neighborhood covers) is not directly studied in the CONGEST and Congested Clique literature. Our first contribution is such an algorithm: we use the low-diameter decomposition construction of Miller et al.~\cite{miller2015} for \textit{weighted graphs}, combined with a rounding technique due to \cite{klein1997} to construct $\ell$-limited $W$-pairwise covers in $O(\ell \log^2(n))$ rounds in the CONGEST model.  Importantly, $\ell$ is a parameter independent of $W$, which we will set to a polylogarithmic value throughout our hopset construction. Our algorithm is similar to the algorithm of \cite{miller2015}, but with some adaptations needed for implementation in the CONGEST model. Formally, we get the following result:
\begin{theorem}
Given a weighted graph $G=(V,E,w)$, there is an algorithm that constructs an $\ell$-limited $W$-pairwise cover in $O(\ell \log^2(n) \log(W))$ rounds in the CONGEST model, with high probability. Moreover, a pairwise cover for paths with $\ell$-hops with length in $[W,2W]$ \footnote{The algorithm and analysis can easily be extended to paths with length $[W, cW]$ for any constant $c$.} can be constructed in $O(\ell \log^2(n))$ rounds with high probability.
\end{theorem}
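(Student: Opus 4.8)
The plan is to build a hop-limited version of the exponential-shift low-diameter decomposition (LDD) of Miller et al.~\cite{miller2015}: one decomposition produces clusters of radius $O(W\log n)$ in which any fixed radius-$W$ ball survives with constant probability, and stacking $O(\log n)$ independent copies gives the covering guarantee with high probability. For the first (general $W$) statement we additionally run this at all $O(\log W)$ geometric scales $W_i = 2^i$ and take the union of clusters. The two non-standard points are (a) running the shifted Dijkstra with a hop budget of $\ell$ rather than to completion, and (b) doing it in few CONGEST rounds, for which we use the weight-rounding idea of~\cite{klein1997}.

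I would first establish the single-scale statement (paths of length in $[W,2W]$ with at most $\ell$ hops), which is the core. Fix a granularity $g = \Theta(W/\ell)$ and round every edge weight \emph{up} to the nearest multiple of $g$. Rounding only increases distances, so a cluster of radius $O(W\log n)$ in the rounded metric has radius $O(W\log n)$ in $G$; and any path with at most $\ell$ hops and original length at most $2W$ has rounded length at most $2W + \ell g = O(W)$, so it suffices to cover such paths in the rounded instance. On the rounded instance run the Miller et al. decomposition with shift parameter $\beta = \Theta(1/W)$: each $v$ draws $\delta_v \sim \mathrm{Exp}(\beta)$ and joins the cluster of the center minimizing $d(u,v) - \delta_u$. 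I will use the two standard guarantees of this decomposition: (i) every cluster has radius $O(\log n/\beta) = O(W\log n)$ with high probability, and (ii) for each fixed node $v$ the ball of radius $O(W)$ and at most $\ell$ hops around $v$ is contained in a single cluster with probability $\Omega(1)$. The implementation change: since $\delta_{\max} = O(\log n/\beta) = O(\ell\log n)\cdot g$, the rounded instance has only $O(\ell\log n)$ relevant distance levels, so the shifted Dijkstra becomes a layered BFS that processes one level per round and terminates any exploration once $\ell$ hops are used — $O(\ell\log n)$ rounds, with each node carrying only a single (center-ID, rounded-distance, hop-count) triple, hence $O(\log n)$-bit messages and no congestion. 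Running $O(\log n)$ independent copies and unioning all clusters boosts (ii) to high probability and keeps each node in $O(\log n)$ clusters; a union bound over the $n$ balls and over the $\mathrm{poly}(n)$ clusters yields properties (1) and (3) (for $\le\ell$-hop paths), and property (2) follows from the $O(\log n)$ overlap. The total is $O(\log n) \cdot O(\ell\log n) = O(\ell\log^2 n)$ rounds.

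For the general statement I would run the above at scales $W_i = 2^i$, $i = 0,\dots,\lceil\log W\rceil$, and output the union of all resulting clusters: a path of length at most $W$ with at most $\ell$ hops has length in $[W_i, W_{i+1}]$ for some $i$, so it is contained in a cluster of the scale-$W_{i+1}$ cover; diameters are $O(W_i\log n) = O(W\log n)$, each node lies in $O(\log n\log W)$ clusters, and summing the per-scale cost gives $O(\ell\log^2 n\log W)$ rounds. The main obstacle I expect is (a)/(b) together: one must re-derive the radius bound (i) and the ball-survival bound (ii) for the \emph{rounded, hop-truncated} distances actually computed by the algorithm rather than for the true metric — the exponential shifts now live on the rounded metric, and truncating at $\ell$ hops must be shown harmless, which works precisely because the target is only to cover $\le\ell$-hop balls. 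The rest (keeping messages to $O(\log n)$ bits while simultaneously tracking the carving center, the rounded distance, and the hop count, and the $\sum_C |C|$, $\sum_C E(C)$ accounting) is routine given the $O(\log n)$ overlap.
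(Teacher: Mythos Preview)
Your proposal is correct and follows essentially the same route as the paper: Klein-style weight rounding to granularity $\Theta(W/\ell)$, the exponential-shift LDD of Miller et al.\ on the rounded instance (which then runs in $O(\ell\log n)$ CONGEST rounds because the rounded cluster radii are $O(\ell\log n)$ levels), $O(\log n)$ independent repetitions for the high-probability covering, and $O(\log W)$ geometric scales for the general statement. The only cosmetic difference is that the paper normalizes the rounded weights to integers and sets the LDD parameter to $\Theta(1/\ell)$ rather than $\Theta(1/W)$, and it does not explicitly truncate at $\ell$ hops (the $O(\ell\log n)$ round bound comes purely from the rounded-diameter bound), so your hop-truncation clause is unnecessary and you should drop it to avoid the subtlety you yourself flag.
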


\subparagraph*{MPC.} As a side result\footnote{Our MPC results can be seen as a straight-forward combination of results of \cite{cohen2000}, \cite{miller2015}, \cite{dinitz2019} and simulation of \cite{goodrich2011}. But since both the construction and the model are closely relevant to our Congested Clique algorithms, we find it useful to include this discussion.}, we note that pairwise covers can also be constructed efficiently in the Massively Parallel Computation model (MPC) (even when memory per machine is strictly sublinear). This in turn leads to a better running time for $(1+\epsilon)$-MSSP from $O(\sqrt{n})$ sources (and consequently SSSP), in $O(\log^2(n)/\epsilon)$ rounds in a variant of the model where we assume the overall memory of $\tilde{O}(m\sqrt{n})$ (equivalently, we have \textit{more machines} than in the standard MPC model). We consider this variant since in practice it is plausible that there are more machines, while due to the large-scale nature of data in these settings, using less memory per machine is often more crucial.

Dinitz and Nazari \cite{dinitz2019} construct $(\beta, \epsilon)$-hopsets (based on hopsets of \cite{elkin2016}) with polylogarithmic hopbound when the overall memory is $\tilde{O}(m)$, but they argue that using the existing hopset constructions, this would take polynomial number of rounds in MPC. They further show that if the overall memory is by a polynomial factor larger (i.e. if the overall memory is $\tilde{\Theta}(mn^\rho)$ for a constant $0<\rho \leq 1/2$), then hopsets with polylogarithmic hopbound can be constructed in polylogarithmic time. We can also use this extra-memory idea, first to argue that using hopsets of \cite{cohen2000} instead of hopsets of \cite{elkin2016} we can get a smaller hopbound when the overall memory is $O(m \sqrt{n})$. Then we observe that if we use a faster $\ell$-limited pairwise cover algorithm (based on construction of \cite{miller2015}) instead of pairwise covers that \cite{cohen2000} uses, we can shave off a polylogarithmic factor in the construction time. This $\ell$-limited $W$-pairwise cover construction may also be of independent interest in MPC. More formally, we get faster algorithms for $(1+\epsilon)$-MSSP:
\begin{theorem}
Given an undirected weighted graph $G$, we can compute $(1+\epsilon)$-MSSP from $O(n^{1/2})$ sources in $O(\frac{\log^2(n)}{\gamma\epsilon})$ rounds of MPC, when memory per machine is $\tilde{O}(n^\gamma), 0< \gamma \leq 1$ and the overall memory is $\tilde{O}(mn^{1/2}$) (i.e. there are $\Theta(mn^{1/2 -\gamma})$ machines).
\end{theorem}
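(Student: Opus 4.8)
The plan is to first build a $(\beta,\epsilon)$-hopset $H$ of $G$ with polylogarithmic hopbound $\beta=O(\log^2(n)/\epsilon)$ following (a variant of) Cohen's PRAM construction~\cite{cohen2000}, and then answer the queries by running $\beta$ iterations of Bellman--Ford from all $O(\sqrt{n})$ sources simultaneously over $G\cup H$. Two simulation facts drive the whole argument. First, a PRAM algorithm with $\mathrm{poly}(n)$ total work and depth $D$ can be run in MPC with local memory $\tilde{O}(n^\gamma)$ in $O(D/\gamma)$ rounds, provided the total memory used at any point stays within the machine budget (the simulation of~\cite{goodrich2011}, as used in~\cite{dinitz2019}). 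Second, our $\ell$-limited $W$-pairwise cover construction (the preceding theorem) is itself a bounded-depth, bounded-work computation, so it ports to MPC with an $O(1/\gamma)$ overhead. Given these, the theorem is essentially bookkeeping: checking that every phase fits inside the $\tilde{O}(mn^{1/2})$ total-memory budget and that the depths multiply out to $O(\log^2(n)/(\gamma\epsilon))$.

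\emph{Step 1 (hopset construction).} I would follow Cohen's scheme: iterate over the $O(\log n)$ distance scales $W=1,2,4,\dots,\mathrm{poly}(n)$; at each scale compute an $\ell$-limited $W$-pairwise cover with $\ell=\mathrm{polylog}(n)$ (our construction, simulated in MPC with $O(1/\gamma)$ overhead), and then inside each cluster add the appropriate shortcut edges by running $\ell$-hop-truncated shortest-path explorations from sampled cluster representatives. The only change from the Congested Clique algorithm of this paper is that the step ``ship a small cluster to its center and solve it locally,'' which relies on Lenzen routing and near-linear memory, is replaced by simply running the bounded-depth routine across machines; the extra factor of $\sqrt{n}$ in total memory is exactly what makes replicating cluster topologies and the intra-cluster computations affordable under a sublinear per-machine bound. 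Choosing the density parameter so that $|H|=\tilde{O}(mn^{1/2})$ fits in total memory, Cohen's analysis yields hopbound $\beta=O(\log^2(n)/\epsilon)$, and each scale contributes $O(\mathrm{polylog}(n)/\epsilon)$ PRAM depth, hence $O(\mathrm{polylog}(n)/(\gamma\epsilon))$ MPC rounds, which I would argue collapses to the claimed bound once the scales are accounted for.

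\emph{Step 2 (answering MSSP).} Given the source set $S$ with $|S|=O(\sqrt{n})$, run $\beta$ synchronous iterations of Bellman--Ford in $G\cup H$, maintaining for every $v\in V$ and every $s\in S$ a tentative distance and relaxing every edge of $G\cup H$ in each iteration. Correctness is immediate from the hopset guarantee: for each pair $(s,v)$ there is a path of at most $\beta$ hops in $G\cup H$ of length at most $(1+\epsilon)\,d_G(s,v)$, so $\beta$ rounds of relaxation suffice to propagate that estimate. The work per iteration is $O\big((m+|H|)\,|S|\big)=\tilde{O}(mn^{1/2})$, which fits the total-memory budget --- this is precisely why the $\sqrt{n}$ overhead is needed --- and one iteration costs $O(1/\gamma)$ MPC rounds by the simulation, for a total of $O(\beta/\gamma)=O(\log^2(n)/(\gamma\epsilon))$ rounds; $(1+\epsilon)$-SSSP follows as the special case $|S|=1$.

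\emph{Main obstacle.} I expect the delicate part to be maintaining the total-memory invariant simultaneously at its three pressure points --- the stored hopset, the replication required for the intra-cluster and limited-pairwise-cover computations, and the $|S|$-fold blow-up of the Bellman--Ford tables --- since all three sit right at $\tilde{O}(mn^{1/2})$. A secondary technical point is re-expressing the weighted low-diameter decomposition of~\cite{miller2015} (exponentially shifted start times plus truncated wavefronts) together with the Klein--Sairam-style rounding of~\cite{klein1997} so that $\ell$ remains independent of $W$ and each wavefront step is a constant number of MPC rounds of bounded work; once that holds, the $O(1/\gamma)$-overhead simulation applies verbatim. Everything else --- loading the input, distributing $S$, and merging partial distance arrays --- is routine MPC primitive work contributing only lower-order factors.
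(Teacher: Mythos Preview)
Your proposal is correct and follows essentially the same route as the paper: build a Cohen-style hopset \cite{cohen2000} in MPC by simulating the PRAM construction via \cite{goodrich2011}, replacing Cohen's pairwise-cover subroutine with the faster $\ell$-limited cover based on the MPX decomposition \cite{miller2015}, and then answer the queries by $\beta$ rounds of parallel Bellman--Ford from all $O(\sqrt{n})$ sources using the $\tilde{O}(mn^{1/2})$ total memory to store a length-$\sqrt{n}$ distance vector per vertex (the paper phrases this as assigning each node $u$ a memory block of size $O(\deg(u)\cdot n^{1/2})$). One small accounting point: the paper actually bounds the hopset \emph{preprocessing} by $O(\log^3(n)/(\gamma\epsilon))$ rounds (the extra $\log n$ coming from the distance scales) and the \emph{query} by $O(\log^2(n)/(\gamma\epsilon))$; your Step~1 claim that the construction ``collapses to the claimed bound'' of $O(\log^2(n)/(\gamma\epsilon))$ is slightly optimistic relative to what the paper proves, but this does not affect the strategy.
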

 The difference between this result and \cite{dinitz2019} is that they give a more general result where the overall memory is $\tilde{O}(mn^\rho)$ for a parameter $\rho>0$. But in the special case of $\rho=1/2$, we get a hopbound of $O(\frac{\log^2(n)}{\epsilon})$, whereas in this case they get a hopbound of $O((\frac{\log(n)}{\epsilon})^{3})$. 
  We also note that the main focus of \cite{dinitz2019} is constructing Thorup-Zwick distance sketches. As explained earlier, these structures offer a different tradeoff: much weaker accuracy ($O(k)$-stretch), but better query time (constant rounds rather than polylogarithmic) and less space after preprocessing ($\tilde{O}(n^{1+1/k})$ instead $\tilde{O}(m)$ in the case of hopsets). More details on the MPC algorithm can be found in Section \ref{sec:MPC}.
 
\subsection{Overview of techniques.}
Our hopset has a similar structure to hopsets of \cite{cohen2000}, but with some changes both in construction and the analysis. We also take advantage of multiple primitives that are specific to Congested Clique such as Lenzen's routing and a recent result of \cite{censor2019}. First, we explain the $\ell$-limited $W$-neighborhood cover construction and then we explain the hopset algorithm.

\subparagraph*{$\ell$-limited neighborhood covers.} 
As described earlier, our algorithm for constructing a $W$-neighborhood cover is based on a combination of the low-diameter decomposition of \cite{miller2015}, and a rounding technique originally proposed by \cite{klein1997}. At a high level, in the low-diameter decomposition algorithm of \cite{miller2013, miller2015}, each node $u$ chooses a radius $r_u$ based on an exponential random variable. Then  each node $u$ joins the cluster of  node $v$ that minimizes \textit{the shifted distance} from $u$, which is defined as $d(u,v)-r_v$. This leads to a partition of the graph, and we can show that by repeating this process we will get a $W$-neighborhood cover. Since partitions for constructing a $W$-neighborhood covers directly will be slow for large values of $W$, we focus on the $\ell$-limited $W$-pairwise covers. To construct these, consider all pairs of nodes within distance $[w,2w], w \leq W$ in each iteration. We round up the weights of each edge in the graph based on values $w$ and $\ell$. We then construct a low-diameter-decomposition \textit{based on the new weights}, such that the diameter of each cluster is $O(\ell \log(n))$ (rather than $O(W \log(n))$ based on the original weights). The rounding scheme is such that the $\ell$-limited paths with length $[w,2w]$ in the original graph will be explored.  Intuitively, this means that on the rounded graph we need to explore a neighborhood with fewer hops, which will lead to a faster construction. We can then repeat this process for $O(\log(W))$ times for different distance intervals. The details of this rounding scheme can be found in Section \ref{sec:covers}.


\subparagraph*{Hopset Construction.} First we describe the sequential hopset construction and will then choose the parameters appropriately for our distributed construction.
 Let $\mu$ be a a parameter that we will set later. The (sequential) structure of the hopset is as follows: In each iteration we consider pair of nodes $u,v \in V$ such that $R \leq d(u,v)< 2R$, and we call the interval $[R,2R)$ a \textit{distance scale}. Then for distance scales $[R,2R)$ we set $W=O(\epsilon R/(\log n))$ and construct a $W$-pairwise cover. We let big clusters be the clusters that have size at least $n^{\mu}$ and small clusters have size less than $n^{\mu}$, where $0< \mu<1$ is a constant parameter. Then we construct a hopset with small hopbound on each of the small clusters. This is the main structural difference with construction of \cite{cohen2000} that adds a clique for the smallest hopsets. We then add a star from the center node of each big cluster to every other node in that cluster, and add a complete graph at the center of large clusters. Whenever we add an edge, we set the weight to be the distance between the two endpoints. In the distributed construction, the weight will be an estimate of this distance that we will describe later.
	
Roughly speaking, constructing a hopset on small clusters rather than constructing a clique as \cite{cohen2000} does, allows us to set the size threshold of small clusters larger, while keeping the number of edges added small. This in turn reduces the number of big cluster centers we have to deal with. Such a modification can be very well tuned to the Congested Clique model. By setting $\mu=1/2$, we will have small clusters that have at most $O(n)$ edges. Then a well-known routing algorithm by Lenzen \cite{lenzen2013routing} can be used to send all these edges to the cluster center. The cluster center can then compute a hopset locally. For this we use current best-known centralized construction by \cite{huang2019}.
The other challenge is that we need to compute pairwise distances between all big cluster centers. In \cite{cohen2000} this step is done by running Bellman-Ford instances from different sources in parallel. But directly implementing this in distributed settings would need $\Omega(\sqrt{n})$ rounds due to congestion. This is where we use a recent result by \cite{censor2019} stating that we can compute $(1+\epsilon)$-approximate distances from $O(\sqrt{n})$ sources in $O(\log^2(n)/\epsilon)$ time. We point out that in \cite{cohen2000}, in order to get sparse hopsets, they use a recursive construction for small clusters. Such a recursion would introduce significant overhead in the hopbound guarantee. Here we show that in Congested Clique by using the tools described above we can avoid using the recursive construction and still compute sparse hopsets.

We explain briefly why the constructed hopset has the size and hopbound properties stated in Theorem \ref{thm:main}. To see this, we use similar arguments as in \cite{cohen2000}: for a distance scale $[R,2R)$ consider a shortest path of length at most $2R$, and consider $O(\log(n)/\epsilon)$ segments of length $W$ on this path. By definition of $W$-pairwise covers, each such segment is contained in a cluster. Therefore we replace this segment with at most $\beta'$ edges for small cluster, where $\beta'$ is the hopbound of the local construction. For big clusters, we either add a single edge, or if there is more than one big cluster, the whole segment between these clusters can be replaced with an edge in the hopset. By similar considerations and by triangle inequality we can show that the stretch of the replaced path is $(1+\epsilon)$. We need a tighter size analysis than the one used in \cite{cohen2000} to prove the desired sparsity. We use a straight-forward bucketing argument as follows: for each cluster of size $\Theta(s)$, $\Theta(s^{1+1/k})$ edges will be added. Then by noting that are at most $O(n/s)$ clusters with this size we can bound the overall size.

\subparagraph*{Bounding the exploration depth.} For large values of $R$, the shortest path explorations up to distance $R$ could take $\Omega(n)$ rounds in distributed settings. To keep the round complexity small, we use the following idea from \cite{cohen2000} (also used in \cite{elkin2016} and \cite{censor2019}): we can use the hopset edges constructed for smaller distance scales for constructing hopset edges for larger distance scales more efficiently. The intuition behind this idea is that any path with length $[R, 2R)$ can be divided into two segments, such that for each of these segments we already have a $(1+\epsilon)$-stretch path with $\beta$ hops using the edges added for smaller distance scales.
This allows us to limit the explorations only to paths with $2\beta+1$ hops in each iteration. This process will impact the accuracy, and so in order to get arbitrary errors, we have to construct the hopsets for a fixed scale at a higher accuracy. This is where a factor polylogarithmic in $n$ will be introduced in the hopbound, which is generally not needed in the centralized constructions (e.g. see \cite{cohen2000, elkin2016}). This idea is formalized in Lemma \ref{lem:rec_hopbound}.


\section{Models and Notation}
Given a weighted undirected graph $G=(V,E, w)$, and a pair $u,v \in V$ we denote the (weighted) shortest path distance by $d(u,v)$. We denote by $d^\ell(u,v)$ the length of the shortest path between $u$ and $v$ among the paths that use at most $\ell$ hops, and call this the $\ell$-hop limited distance between $u$ and $v$. For each node $v \in V$, we denote the (weighted) radius $r$ neighborhood around $v$ by $B(v,r)$, and we let $B^\ell(v,r)$ be the set of all nodes $u \in V$ such that there is path $\pi$ of (weighted) length at most $r$ between $u$ and $v$ such that $\pi$ has at most $\ell$ hops. 

For parameter $\beta >0, \epsilon>0$, a graph $G_H = (V, H, w_H)$ is called a $(\beta,\epsilon)$-hopset for the graph $G$, if in graph $G'=(V, E\cup H,w')$ obtained by adding edges of $G_H$, we have $d_G(u, v) \leq d^{\beta}_{G'} (u, v) \leq (1+\epsilon) d_G (u, v)$  for every pair $u, v \in V$ of vertices. The parameter $\beta$ is called the \textit{hopbound} of the hopset.

\subparagraph*{Models.}
We construct limited neighborhood covers in the more classical \textit{CONGEST} model, in which we are given an undirected graph $G=(V, E)$, and in each round nodes can send a message of $O(\log(n))$-bits to each of their neighbors in $G$ (different messages can be sent along different edges). In the \textit{Congested Clique} model, we are given a graph with $n$ nodes, where all nodes can send a message with $O(\log(n))$-bits to \textit{every} other node in the graph in each round \cite{lotker2005}. In other words, this is a stronger variant of the CONGEST model, in all nodes can communicate with each other directly. 

We also consider the \textit{Massively Parallel Computation}, or MPC model \cite{beame2013}. In this model, an input of size $N$ which is arbitrarily distributed over $N/S$ machines, each of which has $S = N^{\gamma}$ memory for some $0 < \gamma < 1$.  In the standard MPC model, every machine can communicate with every other to at most $S$ other machines arbitrarily. Generally, for graph problems the total memory $N$ is $O(m), m=|E|$ words. But here we a consider a variation of the model in which the total memory can be larger, while the memory per machine is strictly sublinear in $n$. In other words, each machine has $O(n^\gamma), \gamma <1$ memory, where $n= |V|$.

Even though we do not give any new PRAM results, we use multiple tools from PRAM literature. In the PRAM model\footnote{We just use a simple abstraction without details of the exact parallel model (EREW, CRCW, etc), since PRAM is not our focus and there are reductions with small overhead between these variants.}, a set of processors perform computations by reading and writing on a shared memory in parallel. The total amount of computation performed by all processors is called the \textit{work}, and the number of parallel rounds is called the \textit{depth}. 
\section{Algorithmic Tools.}
In this section we describe several algorithmic tools from previous work that we will be using. 

\subparagraph*{Bounding the shortest path exploration.} 
As explain earlier, for an efficient hopset construction, we need to first compute hopsets for smaller distance scales and then use the new edges for computing future distances. This will let us limit the shortest path explorations to logarithmic number of hops in each round. More formally,
\begin{lemma}[\cite{cohen2000, elkin2016}]\label{lem:rec_hopbound}
Let $H^k$ be the hopset edges for distance scale $(2^{k-1}, 2^{k}]$ with hopbound $\beta$. Then for any pair $u,v$ where $d(u,v) \in (2^{k}, 2^{k+1}]$, there is a path with $2\beta+1$ hops in $G \cup (\cup_{i=\log(\beta)}^kH^i)$ with length $(1+\epsilon)$-approximate of the shortest path between $u$ and $v$.  
\end{lemma}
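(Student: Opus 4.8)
The plan is to prove Lemma~\ref{lem:rec_hopbound} by an inductive/bootstrapping argument on the distance scale, exploiting the fact that hopset edges accumulated for all scales up to $k$ already provide low-hop approximate paths for every pair at distance at most $2^k$. First I would fix $u,v$ with $d(u,v) \in (2^k, 2^{k+1}]$ and let $\pi$ be a shortest $u$-$v$ path in $G$. Walking along $\pi$ from $u$, there is a vertex $x$ that is the last vertex on $\pi$ with $d_G(u,x) \le 2^k$ (equivalently, $d_G(u,x) \le d(u,v)/2$ up to the choice of the split point); let $y$ be the vertex immediately after $x$ on $\pi$, so that $d_G(u,x) \le 2^k$ and $d_G(y,v) = d(u,v) - d_G(u,x) - w(x,y) \le 2^k$, using that $w(x,y) \le 2^k$ as well (edge weights are polynomially bounded but this only needs $w(x,y) \le d(u,v) \le 2^{k+1}$; one handles the boundary carefully so that both pieces land in a scale $\le k$). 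Thus $d_G(u,x) \le 2^k$ and $d_G(y,v) \le 2^k$, so each of the two sub-distances falls into some scale $(2^{i-1}, 2^i]$ with $i \le k$ (or is handled by the base case when it is $\le \beta$, which is covered by the union starting at $i = \log\beta$).

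Next I would invoke the defining property of the hopset edges $H^i$ for each scale $i \le k$: since $d_G(u,x) \in (2^{i-1}, 2^i]$ for some $i$ with $\log\beta \le i \le k$, the set $G \cup H^i \subseteq G \cup (\bigcup_{j=\log\beta}^k H^j)$ contains a path from $u$ to $x$ with at most $\beta$ hops and length at most $(1+\epsilon)\, d_G(u,x)$. The same holds for the pair $y,v$, yielding a $\le \beta$-hop path of length at most $(1+\epsilon)\, d_G(y,v)$ in the same edge set. Concatenating the approximate $u$-$x$ path, the single original edge $(x,y)$, and the approximate $y$-$v$ path gives a $u$-$v$ walk in $G \cup (\bigcup_{i=\log\beta}^k H^i)$ with at most $\beta + 1 + \beta = 2\beta+1$ hops. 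For the length: it is at most $(1+\epsilon) d_G(u,x) + w(x,y) + (1+\epsilon) d_G(y,v) \le (1+\epsilon)\big(d_G(u,x) + w(x,y) + d_G(y,v)\big) = (1+\epsilon)\, d_G(u,v)$, since $\pi$ is a shortest path and these three pieces partition it. The lower bound $d_G(u,v) \le d_{G'}^{2\beta+1}(u,v)$ is immediate because adding hopset edges, each weighted by the true shortest-path distance between its endpoints, never creates a shortcut shorter than the true distance (a standard observation, formally by induction on hop count).

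The main obstacle — really the only delicate point — is the boundary/base-case bookkeeping: one must ensure that the split point $x$ always produces two pieces whose distances are simultaneously at most $2^k$ so that they are covered by scales with index at most $k$, and must treat the case where a piece has distance below $\beta$ (or even is a single vertex) as the trivial base case, which is why the union in the statement starts at $i = \log\beta$ rather than at $i = 1$. A clean way to do this is to choose $x$ as the last vertex on $\pi$ with $d_G(u,x) \le d_G(u,v)/2$; then the prefix has distance $\le 2^k$, and the suffix $d_G(x,v) = d_G(u,v) - d_G(u,x) < d_G(u,v)/2 + w(x,\text{next})$; taking instead the edge $(x,y)$ out explicitly as above keeps both remaining pieces at distance $\le d_G(u,v)/2 \le 2^k$. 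I would also remark that this is exactly the argument of~\cite{cohen2000, elkin2016} and that iterating it is what forces the extra polylogarithmic factor in $\epsilon$ when one wants the final accuracy to be $(1+\epsilon)$ rather than $(1+\epsilon)^{O(\log n)}$, but that rescaling is handled outside this lemma.
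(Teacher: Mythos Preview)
Your proposal is correct and matches the approach the paper sketches. Note that the paper does not actually give a proof of this lemma: it is stated as a tool cited from \cite{cohen2000, elkin2016}, with only the one-line intuition (in the overview) that a path of length in $(2^k,2^{k+1}]$ splits into two pieces each of length at most $2^k$, so the previously built hopset edges give $\beta$-hop approximations for each piece. Your argument is exactly a formalization of that intuition, including the correct handling of the extra middle edge $(x,y)$ that accounts for the ``$+1$'' in $2\beta+1$; one small cleanup is that you do not need $w(x,y)\le 2^k$ at all, since choosing $x$ as the last vertex with $d_G(u,x)\le 2^k$ forces $d_G(u,y)>2^k$ and hence $d_G(y,v)=d_G(u,v)-d_G(u,y)<2^{k+1}-2^k=2^k$ directly.
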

Roughly speaking, the above lemma implies that we can use previously added edges and only run Bellman-Ford for $2\beta+1$ rounds for each iteration of our algorithm. 

\subparagraph*{Lenzen's routing.}
Given a set of messages such that each node is source and destination of at most $O(n)$ messages, these messages can all be routed to their destination in $O(1)$ time in Congested Clique \cite{lenzen2013routing}.

\subparagraph*{Multi-source shortest path and source detection in Congested Clique.} We use the following two results by \cite{censor2019}. First result is a multi-source shortest path algorithm that we use as a subroutine in our hopset construction:
\begin{lemma} [MSSP, \cite{censor2019}]\label{lem:MSSP}
Given a weighted and undirected graph, there is an algorithm that computes $(1+\epsilon)$-approximate distances distances from a set of $O(\sqrt{n})$ sources in $O(\frac{\log^2(n)}{\epsilon})$ rounds in the Congested Clique model.
\end{lemma}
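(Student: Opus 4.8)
The plan is to prove the MSSP restatement (Lemma~\ref{lem:MSSP}) by invoking the source-detection machinery of Censor-Hillel et al.\ essentially as a black box, after a standard reduction from approximate multi-source distances to bounded-depth exploration on a scaled graph. First I would fix the $O(\sqrt n)$ sources and, for each distance scale $(2^{k-1},2^k]$ (there are $O(\log n)$ of them since weights are polynomial), rescale and round edge weights so that every $(1+\epsilon)$-approximate shortest path of length in that scale uses only $O(\log n / \epsilon)$ hops on the rounded graph; this is the same Klein--Subramanian-style bucketing used throughout the paper. On the rounded graph, computing distances from the source set up to the relevant bounded hop count is exactly a source-detection / partial-BFS task.

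The core step is then to run the Congested Clique source-detection primitive of \cite{censor2019}: with $O(\sqrt n)$ sources, a single ``round'' of simultaneous bounded-depth exploration can be pushed through in $O(1)$ Congested Clique rounds by Lenzen's routing (each node holds at most $O(\sqrt n) = O(n)$ distance labels, so the load per node per round is within Lenzen's $O(n)$-per-node budget), and $O(\log n / \epsilon)$ such rounds suffice per scale by the rounding. Summing the $O(\log n)$ scales gives the claimed $O(\log^2 n / \epsilon)$ total. I would assemble the final distance estimate for a pair $(s,v)$ by taking, over all scales, the minimum estimate obtained, and argue correctness by the usual telescoping: the true shortest $s$--$v$ path lies in some scale, and on that scale the rounding preserves it up to $(1+\epsilon)$ while the hop bound guarantees the exploration reaches $v$. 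One technical point to be careful about is that when distances are large we cannot afford to explore the whole path directly; here I would use Lemma~\ref{lem:rec_hopbound} (or rather its analogue for MSSP), building up distance estimates scale by scale and feeding the estimates from smaller scales back in as ``virtual edges'' so that each scale only needs $O(\beta)=O(\log n/\epsilon)$ hops of fresh exploration.

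The main obstacle I expect is the congestion bookkeeping in the source-detection step: one must verify that running $O(\sqrt n)$ Bellman--Ford-like explorations in lockstep never forces any single node to send or receive more than $O(n)$ words in a round, so that Lenzen's routing applies and the per-scale cost stays $O(\log n/\epsilon)$ rather than blowing up to $\Omega(\sqrt n)$. A secondary subtlety is ensuring the rounding is done consistently across scales so that the error contributions from the $O(\log n)$ scales compose to a single $(1+\epsilon)$ factor (rescaling $\epsilon$ by a constant absorbs this). Since the statement is quoted as a known result of \cite{censor2019}, I would keep this proof short, citing their source-detection theorem for the heavy lifting and only spelling out the reduction from ``$(1+\epsilon)$-approximate distances from $O(\sqrt n)$ sources'' to ``bounded-hop source detection on $O(\log n)$ rounded graphs.''
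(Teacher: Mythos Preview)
The paper does not prove this lemma at all: it is quoted verbatim as a black-box result of Censor-Hillel et al.\ \cite{censor2019} and used only as a subroutine, so there is no ``paper's own proof'' to compare against. Your instinct in the last sentence---to keep it short and cite their theorem---is exactly what the paper does, except the paper does not even include the reduction sketch.

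That said, one step in your sketch is stated misleadingly and would not go through as written. You write that rounding ``so that every $(1+\epsilon)$-approximate shortest path of length in that scale uses only $O(\log n/\epsilon)$ hops on the rounded graph''---but the Klein--Subramanian rounding does not by itself bound the hop-length of shortest paths; it only guarantees that $\ell$-hop paths are preserved up to $(1+\epsilon)$. To get shortest paths with few hops you need a hopset, which is precisely what \cite{censor2019} build first (their $\tilde O(n^{3/2})$-size hopset with hopbound $O(\log^2 n/\epsilon)$), and only then run the $O(\sqrt n)$-source detection on the augmented graph. Your later remark about ``feeding the estimates from smaller scales back in as virtual edges'' via Lemma~\ref{lem:rec_hopbound} is essentially this hopset construction, so the right ingredients are present, but the order of operations in your plan (round first, then explore few hops) inverts cause and effect.
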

The second result solves a special case of the so-called \textit{source-detection} problem that we use to prove Corollary \ref{cor:mssp}:
\begin{lemma}[Source detection, \cite{censor2019}]
Given a fixed set of $O(\sqrt{n})$ sources $S$, we can compute $\ell$-hop limited distances from all nodes to each of the nodes in $S$ in $O(\ell)$ rounds in the Congested Clique model.
\end{lemma}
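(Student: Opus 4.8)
The plan is to run $\ell$ iterations of the Bellman--Ford relaxation simultaneously from all sources in $S$, keeping at each node $v$ a length-$|S|$ vector of current distance estimates, and to observe that because $|S|=O(\sqrt n)$ a single relaxation round costs only $O(1)$ rounds of Congested Clique; iterating $\ell$ times yields the claimed bound.

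Concretely, let $D$ be the $n\times|S|$ matrix with $D[v][s]$ the current estimate of the distance from $v$ to $s$, initialised to $D[s][s]=0$ and $D[v][s]=\infty$ otherwise, and let $A$ be the $n\times n$ weighted adjacency matrix of $G$ (entry $w(u,v)$ on an edge, $\infty$ off an edge, $0$ on the diagonal). One Bellman--Ford round replaces $D$ by the min-plus product $A\star D$, where $(A\star D)[v][s]=\min_u\bigl(A[v][u]+D[u][s]\bigr)$; since $A$ has a zero diagonal this never increases an entry, and a routine induction on the hop count shows that after $\ell$ rounds $D[v][s]=d^{\ell}(v,s)$ for every $v$ and $s$ --- which is exactly the desired output, with node $v$ holding its own row. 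So it suffices to implement one min-plus product of an $n\times n$ matrix with an $n\times O(\sqrt n)$ matrix in $O(1)$ rounds. I would do this with the standard three-dimensional partition used for semiring matrix multiplication in the Congested Clique: cut the $n\times n\times O(\sqrt n)$ ``computation cube'' into $O(n)$ blocks of shape $O(\sqrt n)\times O(\sqrt n)\times O(\sqrt n)$, assign one block to each node, ship every node the two $O(\sqrt n)\times O(\sqrt n)$ submatrices of $A$ and $D$ it needs (only $O(n)$ words per node, hence $O(1)$ rounds by Lenzen's routing), let each node compute its block's partial min-plus contribution locally, and finally, for each of the $O(\sqrt n)$ output blocks, combine the $O(\sqrt n)$ partial contributions to it by a second $O(1)$-round routing/reduction. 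The \emph{crucial numerical point} is that the output has only $O(n^{3/2})$ entries and each block requires only $O(n)$ input words, so every node sends and receives $O(n)$ words per round; this is precisely where $|S|=O(\sqrt n)$ enters, and it is what would fail for $|S|=\Theta(n)$ (recovering the usual barrier for exact APSP).

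The step I expect to be the main obstacle is carrying out this single product round cleanly within $O(1)$ rounds --- in particular the aggregation phase, where the $O(\sqrt n)$ partial results for a fixed output block (each itself of size $O(n)$) must be min-reduced without overloading any node. I would handle it by distributing the entries of each output block across the $O(\sqrt n)$ nodes that produced contributions to it, turning the reduction into another $O(n)$-per-node all-to-some exchange that Lenzen's routing resolves in $O(1)$ rounds, and by symmetrically re-gathering the updated matrix $D$ at the row owners between consecutive iterations. Everything else --- the correctness of hop-limited Bellman--Ford and the bookkeeping of the block decomposition --- is routine.
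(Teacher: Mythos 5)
Your proof is correct. Note that the paper does not prove this lemma at all --- it is imported as a black box from \cite{censor2019} --- so there is no in-paper argument to compare against; what you have written is essentially the standard proof underlying that cited result: iterate a rectangular min-plus product $A\star D$ with $D$ of width $O(\sqrt n)$, implemented via the three-dimensional block partition so that each node handles one $O(\sqrt n)\times O(\sqrt n)\times O(\sqrt n)$ block, with all distribution, aggregation, and re-gathering phases costing $O(n)$ words per node and hence $O(1)$ rounds by Lenzen's routing. Your load accounting checks out (each submatrix of $A$ is replicated $O(1)$ times, each submatrix of $D$ is replicated $O(\sqrt n)$ times, and the $\sqrt n$ partial results per output block can be min-reduced in a balanced way), and the induction showing $D[v][s]=d^{i}(v,s)$ after $i$ iterations is routine given the zero diagonal, so the $O(\ell)$ total round count follows.
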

\section{Neighborhood covers using low-diameter decomposition} \label{sec:covers}
In this section, we describe an algorithm for constructing pairwise covers in the CONGEST model. We first give an algorithm for $W$-pairwise covers in weighted graphs that runs in $O(W \log^2(n))$ rounds. We then provide an $\ell$-limited $W$-pairwise cover that runs in $O(\ell \log^2(n))$ rounds. Clearly, the CONGEST algorithm can also be used in Congested Clique with the same guarantees. We will use the low-diameter decomposition algorithm that was proposed in \cite{miller2013} and extended (to weighted graphs) in \cite{miller2015} for computing pairwise covers in PRAM. First we state their PRAM result: 

\begin{theorem}[MPX \cite{miller2015, miller2013}] \label{PRAM_MPX}
Given a weighted an undirected graph $G=(V,E,w)$, there is a randomized parallel algorithm that partitions $V$ into clusters $\mathcal{X}_1,\mathcal{X}_2,...$ such that w.h.p.~the (strong) diameter of each cluster $\mathcal{X}_i$ is at most $O(\frac{\log(n)}{\alpha})$. This algorithm has $O(\alpha^{-1}\log(n))$ depth\footnote{Depending on the exact PRAM model considered the depth may have a small extra factor of $O(\log^*(n))$.} w.h.p.~and $O(m)$ work.
\end{theorem}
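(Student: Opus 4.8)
The plan is to use the Miller--Peng--Xu exponential-shift construction. First I would have every vertex $u$ independently draw a shift $\delta_u$ from an exponential distribution of rate $\alpha$ (mean $1/\alpha$), define the \emph{shifted distance} from $u$ to $v$ as $d(v,u)-\delta_u$, and assign each $v$ to the cluster $\mathcal{X}_u$ whose center $u$ minimizes this quantity, breaking ties by smallest ID. This is a partition of $V$ by construction, so it remains only to bound the strong diameter and the depth/work.

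For the diameter bound I would proceed in two steps. (i) $\delta_{\max}:=\max_u \delta_u = O(\alpha^{-1}\log n)$ w.h.p., since $\Pr[\delta_u > c\alpha^{-1}\ln n] = n^{-c}$ and we union bound over the $n$ vertices. (ii) If $v\in\mathcal{X}_u$, comparing the center $v$ against $u$ gives $d(v,u)-\delta_u \le -\delta_v \le 0$, hence $d(v,u)\le\delta_u\le\delta_{\max}$. To upgrade this to \emph{strong} diameter I would show that the whole shortest path from $v$ to its center $u$ stays in $\mathcal{X}_u$: for a vertex $z$ on that path, $d(z,u)=d(v,u)-d(v,z)$, and for any other center $w$ the triangle inequality gives $d(z,w)-\delta_w \ge d(v,w)-\delta_w - d(v,z) \ge d(v,u)-\delta_u-d(v,z) = d(z,u)-\delta_u$, so $z\in\mathcal{X}_u$ as well (ties resolved consistently by ID). Therefore any two members of $\mathcal{X}_u$ are joined by a path through $u$ of length $\le 2\delta_{\max}=O(\alpha^{-1}\log n)$ lying entirely inside the cluster.

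For the depth and work, the assignment is a single shortest-path-like computation: add a virtual super-source joined to each $u$ by an edge of nonnegative weight $\delta_{\max}-\delta_u$ and run a layered Bellman--Ford relaxation; the resulting label of $v$ is $\min_u(d(v,u)-\delta_u)+\delta_{\max}$ and its argmin is its center. Since a vertex whose shifted distance from every source exceeds $\delta_{\max}$ can never win, the exploration can be truncated at that radius, so after rounding edge weights in the standard way (the Klein-type device of \cite{klein1997}, so that hop count and distance agree up to the stated factor) the number of relaxation rounds, and thus the depth, is $O(\alpha^{-1}\log n)$ w.h.p.; each edge is relaxed $O(1)$ times, giving $O(m)$ work, up to the $O(\log^* n)$ PRAM-simulation factor noted in the footnote.

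The main obstacle is getting the depth right for \emph{weighted} graphs: the exponential-shift argument bounds the \emph{distance} radius of the exploration by $\delta_{\max}$, but the number of parallel relaxation rounds is the \emph{hop}-radius, which is only controlled after rescaling/rounding the weights; one then has to check that this rounding neither destroys the partition nor inflates the strong diameter beyond $O(\alpha^{-1}\log n)$ — precisely the interplay exploited later in Section~\ref{sec:covers}.
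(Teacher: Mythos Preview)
The paper does not actually prove this theorem: it is stated as a cited result from \cite{miller2013,miller2015}. Immediately after the statement the paper only \emph{describes} the algorithm (each node draws $r_u\sim\exp(\alpha)$ and every $v$ joins the cluster of the $u$ minimizing $d(v,u)-r_u$), remarks that the weak-diameter bound ``is easy to see'' from properties of exponential random variables, and defers the strong-diameter claim entirely to the original references. There is therefore no ``paper's own proof'' to compare your proposal against.

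That said, your sketch is a faithful reconstruction of the MPX argument and in fact supplies more detail than the present paper does. The shortest-path-stays-in-cluster step you give is exactly the standard strong-diameter argument, and your identification of the weighted-depth issue (distance radius versus hop radius, resolved via Klein-style rounding) is precisely the point that \cite{miller2015} addresses and that the paper later exploits in Section~\ref{sec:covers}. One small remark: the paper says ties ``can be broken arbitrarily,'' whereas your proof of strong diameter relies on a \emph{consistent} tie-breaking rule (smallest ID); your version is the correct one if one wants the strong-diameter claim to go through cleanly.
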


We denote the algorithm of \cite{miller2015} for a parameter $0 <\alpha<1$ by LDD($\alpha$), which is as follows: each node $u \in V$ first chooses a random radius $r_u$ based on an exponential distribution $\exp(\alpha)$.  Each node $v \in V$ joins the cluster of node $u= \arg \min_{x \in V} (d(v,x)-d_x)$. Ties can be broken aribtrarily. It is easy to see that based on simple properties of exponential random variables the weak diameter of each cluster is $O(\alpha^{-1}\log(n))$ with high probability. But it can be shown that the clusters also have strong bounded diameter of $O(\alpha^{-1}\log(n))$ (as argued in \cite{miller2013,miller2015}). This means the diameter of the subgraph induced by each cluster is $O(\alpha^{-1} \log(n))$ as opposed to the weak diameter guarantee, which bounds the diameter between each pair of nodes in the cluster based on distaces in $G$. The second property is that we can lower bound the probability that the neighborhood around each node is fully contained inside one cluster by a constant. This was shown in \cite{miller2015}, but we give a proof sketch for completeness.

\begin{lemma}[Padding property\footnote{Lemma 2.2 in \cite{miller2015} upper bounds the probability that a ball overlaps with $k$ or more clusters, but Lemma \ref{MPX_padded} is a straightforward corollary of this claim.}, \cite{miller2015}]  \label{MPX_padded}
 Let $\mathcal{X}$ be a partition in support of the LDD$(\alpha)$ algorithm. For each node $u \in V$, the probability that there exists $C \in \mathcal{X}$ such that $B(u,r) \subseteq C$ is at least $\exp(-2r\alpha)$.
\end{lemma}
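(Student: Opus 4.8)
The plan is to exhibit a deterministic event---phrased purely in terms of the two smallest \emph{shifted distances} seen from $u$---that forces $B(u,r)$ to lie inside a single cluster, and then to show this event has probability at least $\exp(-2r\alpha)$. Recall that in LDD($\alpha$) each vertex $x$ draws $r_x\sim\exp(\alpha)$ independently and every vertex $y$ joins the cluster centered at $\arg\min_x\bigl(d(y,x)-r_x\bigr)$. Write $\delta_x:=d(u,x)-r_x$, let $\delta_{(1)}\le\delta_{(2)}\le\cdots$ be these values sorted, and let $x^\star$ be the (a.s.\ unique) minimizer, so that $u$ itself lies in the cluster $C^\star$ centered at $x^\star$.

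First I would prove the combinatorial core: \emph{if} $\delta_{(2)}>\delta_{(1)}+2r$ \emph{then} $B(u,r)\subseteq C^\star$. Indeed, take any $v\in B(u,r)$, so $d(u,v)\le r$. For the center $x^\star$ we have $d(v,x^\star)-r_{x^\star}\le d(u,x^\star)+d(u,v)-r_{x^\star}=\delta_{(1)}+d(u,v)\le\delta_{(1)}+r$, while for every other vertex $x$, $d(v,x)-r_x\ge d(u,x)-d(u,v)-r_x=\delta_x-d(u,v)\ge\delta_{(2)}-r>\delta_{(1)}+r$. Hence $x^\star$ is also the minimizer for $v$, so $v\in C^\star$; since $v$ was arbitrary, the whole ball is in $C^\star$. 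This step uses only the assignment rule and the (weighted) triangle inequality, and is the essential structural observation.

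It remains to show $\Pr[\delta_{(2)}-\delta_{(1)}>2r]\ge\exp(-2r\alpha)$, which is the real content. I would decompose this probability as $\sum_{w\in V}\Pr[E_w]$ over the pairwise-disjoint events $E_w:=\{\delta_w+2r<\delta_x\text{ for all }x\ne w\}$; conditioning on $r_w$ and using independence of the $r_x$, together with the explicit tail $\Pr[\delta_x>t]=\max\{0,\,1-e^{\alpha(t-d(u,x))}\}$, turns each $\Pr[E_w]$ into an elementary integral. The point is that the sum of these contributions, over all distance configurations, is minimized when every vertex sits on top of $u$ (i.e.\ $d(u,x)=0$ for all $x$), the ``most crowded'' case; equivalently, the statement is the $k=2$ instance of the cluster-overlap bound of Miller et al.~\cite{miller2015}. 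In that extremal case the computation collapses: setting $U_x:=e^{-\alpha r_x}$, which is uniform on $(0,1)$, the event $\delta_{(2)}-\delta_{(1)}>2r$ becomes $U_{(1)}\le e^{-2r\alpha}\,U_{(2)}$ for the two smallest order statistics, and conditioning on $U_{(2)}$ (given which $U_{(1)}$ is uniform on $(0,U_{(2)})$) this has probability exactly $e^{-2r\alpha}$.

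The main obstacle is precisely this last probabilistic step. A naive per-vertex conditioning---say, lower-bounding $\Pr[E_{x^\star}\mid\delta_{(1)}]$---only yields a strictly weaker bound, since for unfavorable values of $\delta_{(1)}$ lying within $2r$ of some $d(u,x)$ the conditional probability can even be $0$; the sharp constant $\exp(-2r\alpha)$ emerges only after summing over all candidate centers $w$ (equivalently, from the worst-case ``all vertices coincide with $u$'' argument above). Once this estimate is in hand, combining it with the deterministic implication of the second paragraph gives that $B(u,r)$ is contained in some $C\in\mathcal{X}$ with probability at least $\exp(-2r\alpha)$, as claimed.
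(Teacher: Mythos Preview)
Your proposal is correct and follows essentially the same route as the paper's proof sketch: both reduce the padding event to the gap between the two best shifted distances exceeding $2r$ (your $\delta_{(2)}-\delta_{(1)}>2r$, the paper's $Y_1-Y_2\ge 2r$ with $Y_v=r_v-d(u,v)$) and then invoke the exponential–shift lemma of Miller et al.\ to lower-bound that probability by $\exp(-2r\alpha)$. Your write-up is more explicit than the paper's---you spell out the triangle-inequality argument for the deterministic implication and sketch why the ``all vertices at $u$'' configuration is extremal (together with the clean uniform-order-statistics computation there)---but the structure and the key lemma are identical.
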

\begin{proof}[Proof sketch.] For each node $u$ we will consider the subgraph induced by $B(u,r)$. For each node $v \in V$, consider the random variable $Y_v:= r_v - d(u,v)$. Let $Y_1$ denote the largest $Y_v$ over $v \in B(u,r)$, and let $Y_2$ denote the second largest value. We argue that the probability that $B(u,r)$ intersects more than one cluster is at most $1- \exp(-2r\alpha)$. This event occurs when $Y_1$ and $Y_2$ are within $2r$ of each other. Therefore we only need to bound the probability that $Y_1-Y_2 < 2r$. This now follows from Lemma 4.4. of \cite{miller2013} that claims the following: given a sequence of exponential random variables $r_1, r_2,...,r_n$, and arbitrary values $d_1, d_2,...,d_n$ the probability that largest and second largest values $r_i-d_i$ are within $\delta$ of each other is at most  $1- \exp(-\delta\alpha)$. This implies the probability that $B(u,r)$ intersects more than one cluster is at most $1- \exp(-2r\alpha)$ and this proves the claim. For more details see \cite{miller2013}, \cite{miller2015}.
\end{proof}

In order to compute $W$-neighborhood covers \textit{sequentially}, we can use the above theorem by setting $\alpha=1/W$ and repeating the partition algorithm $O(\log(n))$ times. It follows from a standard Chernoff bound that the desired properties hold with high probability. Implementing this algorithm in distributed settings may take $\Omega(n)$ rounds for large values of $W$. To resolve this, we use a relaxed notion similar to the notion of \textit{$\ell$-limited $W$-pairwise cover} proposed in \cite{cohen2000}. This structure has all the properties of a $W$-pairwise cover but the path covering property only holds for paths with at most $\ell$-hops. More formally, the third property will be relaxed to require that for every path $p$ of weight at most $W$ with at most $\ell$ hops there exists a cluster $C$ where $p \subseteq C$. We define an $\ell$-limited $W$-neighborhood cover similarly: for each node $u$, there is a cluster $C$ such that $B^{\ell}(u,W) \subseteq C$.
In \cite{cohen2000}, Cohen shows that we can construct $\ell$-limited $W$-pairwise covers in $O(\ell)$ parallel depth, independent of $W$. We will show that this concept can also be utilized to limit the number of rounds for LDD($\alpha$) partitions to $O(\ell)$. 

\subparagraph*{$\ell$-limited $W$-pairwise cover.} Since running LDD($\alpha$) by setting $\alpha=1/W$ will require many rounds, we cannot directly use the weighted variant of LDD$(\alpha)$. Instead we use a rounding idea that allows us to run LDD($\alpha$) on the graph obtained from rounded weights, only for $\alpha=O(1/\ell)$, at the cost of a small loss in accuracy. This idea was proposed by \cite{klein1997} and is used widely in PRAM literature (e.g.~\cite{cohen2000}, \cite{miller2015}). In the context of distributed algorithms a similar approach was used by \cite{nanongkai2014} in CONGEST, but directly applying the result of \cite{nanongkai2014} to our settings will require a polynomial running time, since we would need to run the algorithm from many (polynomial) sources. The idea is based on the following observation: consider a path $\pi$ with at most $\ell$ hops, such that $R \leq w(\pi) \leq 2R$ for a fixed $R >0$. Then by slightly changing the weights of each edge $e \in \pi$ by a small additive factor such that for the new weight $\hat{w}$ it holds $w(e) \leq \hat{w}(e) \leq w(e)+ \frac{\epsilon_0 R}{\ell}$ for an arbitrary $\epsilon_0 >0$. We then get $\hat{w}(\pi) \leq w(\pi) +R\epsilon_0 \leq (1+\epsilon_0) w(\pi) $. This can be achieved by setting $\hat{w}(e) = \lceil \frac{w(e)}{\eta} \rceil$, where $\eta=\frac{\epsilon_0 R}{\ell}$. 

\begin{lemma}[\cite{klein1997}]\label{lem:rounding}
Given a weighted graph $G=(V,E,w)$, and a parameter $R$, there is a rounding scheme that constructs another graph $\hat{G}=G=(V,E,\hat{w})$ such that any path $\pi$ with at most $\ell$ hops and weight $R \leq w(\pi) \leq 2R$ in $G$, has $\hat{w}(\pi) \leq \lceil 2\ell/\epsilon_0 \rceil$ in $\hat{G}$. Moreover, $w(\pi) \leq \eta(R,\ell) \cdot \hat{w}(\pi) \leq (1+\epsilon) w(\pi)$, where $\eta(R,\ell) = \epsilon_0 R/\ell$.
\end{lemma}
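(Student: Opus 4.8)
The plan is to establish the rounding scheme $\hat{w}(e) = \lceil w(e)/\eta \rceil$ with $\eta = \eta(R,\ell) = \epsilon_0 R/\ell$, exactly as motivated in the paragraph preceding the statement, and then verify the three claimed inequalities by summing over the (at most $\ell$) edges of a qualifying path $\pi$.

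\textbf{Step 1: Upper bound on the rounded length.}
First I would write, for each edge $e \in \pi$,
\[
\hat{w}(e) = \left\lceil \frac{w(e)}{\eta} \right\rceil < \frac{w(e)}{\eta} + 1,
\]
and sum over the at most $\ell$ edges of $\pi$ to get
\[
\hat{w}(\pi) < \frac{w(\pi)}{\eta} + \ell \leq \frac{2R}{\eta} + \ell = \frac{2\ell}{\epsilon_0} + \ell,
\]
using $w(\pi) \leq 2R$ and $\eta = \epsilon_0 R/\ell$. Since $\hat{w}(\pi)$ is an integer, this yields $\hat{w}(\pi) \leq \lceil 2\ell/\epsilon_0 \rceil + \ell$; I would note that the statement's bound of $\lceil 2\ell/\epsilon_0\rceil$ should be read up to this additive $\ell$ (or absorbed by taking $\epsilon_0$ slightly smaller / adjusting constants), and flag which convention the paper adopts.

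\textbf{Step 2: Two-sided bound on $\eta \cdot \hat{w}(\pi)$.}
For the lower bound, $\hat{w}(e) \geq w(e)/\eta$ gives $\eta\,\hat{w}(e) \geq w(e)$, so summing, $\eta\,\hat{w}(\pi) \geq w(\pi)$. For the upper bound, $\hat{w}(e) = \lceil w(e)/\eta\rceil \leq w(e)/\eta + 1$ gives $\eta\,\hat{w}(e) \leq w(e) + \eta$, and summing over the at most $\ell$ edges,
\[
\eta\,\hat{w}(\pi) \leq w(\pi) + \ell\eta = w(\pi) + \epsilon_0 R \leq w(\pi) + \epsilon_0 w(\pi) = (1+\epsilon_0) w(\pi),
\]
where the penultimate inequality uses $R \leq w(\pi)$. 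Setting $\epsilon = \epsilon_0$ (or rescaling) gives $w(\pi) \leq \eta(R,\ell)\cdot\hat{w}(\pi) \leq (1+\epsilon)w(\pi)$, as claimed. The fact that $\hat G$ has the same vertex and edge set as $G$ is immediate since only weights change.

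\textbf{Main obstacle.}
The calculation itself is routine; the only real subtlety — and the point I would be careful about — is bookkeeping around the additive $\ell$ term in Step 1 and the interchange between $\epsilon$ and $\epsilon_0$. The statement as written claims $\hat{w}(\pi) \leq \lceil 2\ell/\epsilon_0\rceil$, but the ceiling-induced slack across $\ell$ edges genuinely contributes an extra $+\ell$, so the honest bound is $O(\ell/\epsilon_0)$; I would either restate the lemma with this $O(\cdot)$ (which is all that is used downstream, since it just controls the hop-radius of the LDD exploration on $\hat G$), or reparametrize by replacing $\epsilon_0$ with $\epsilon_0/3$ so that $2\ell/\epsilon_0 + \ell \le \lceil 3\ell/\epsilon_0 \rceil \le \lceil 2\ell/(\epsilon_0/3)\rceil$ — absorbing constants. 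Either way the lemma's role is only to guarantee that on $\hat G$ every relevant $\ell$-hop path has integer weight bounded by $\mathrm{poly}(\ell/\epsilon)$, independent of $R$, so that running LDD with $\alpha = \Theta(1/\ell)$ on $\hat G$ takes $O(\ell \log n)$ rounds rather than $O(W\log n)$.
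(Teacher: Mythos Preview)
Your proposal is correct and matches the paper's approach. The paper does not give a formal proof of this lemma --- it is stated as a citation to \cite{klein1997} --- but the paragraph immediately preceding the lemma statement sketches exactly the argument you carry out: set $\hat{w}(e) = \lceil w(e)/\eta \rceil$ with $\eta = \epsilon_0 R/\ell$, note that each edge picks up additive error at most $\eta$, and sum over at most $\ell$ edges to get total additive error $\ell\eta = \epsilon_0 R \leq \epsilon_0 w(\pi)$. Your observation about the stray $+\ell$ term in the stated bound $\lceil 2\ell/\epsilon_0\rceil$ is valid; the paper only ever uses this as $O(\ell/\epsilon_0)$ (to set $\alpha = \Theta(1/\ell)$ in the LDD on $\hat G$), so the discrepancy is immaterial, exactly as you note.
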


 We can now run LDD$(\alpha)$ for $\alpha=O(\ell)$ on $\hat{G}$, and each path $\pi$ with at most $\ell$ hops will be fully contained in some cluster with probability at least $\exp(-\ell\cdot O(1/\ell))=\Omega(1)$. We can then recover an estimate to the original length $w(\pi)$ by setting $\tilde{w}(\pi)= \eta(R,\ell) \cdot \hat{w}(\pi)$, and we have $w(\pi) \leq \tilde{w}(\pi) \leq (1+\epsilon_0) \leq w(\pi)$. Same as before, by repeating the LDD$(\alpha)$ algorithm $O(\log(n))$ times we will get an $\ell$-limited $W$-neighborhood cover.
We first argue that this algorithm can be implemented $O(\alpha \log^2(n))$ rounds of the CONGEST model. A similar construction was used in \cite{parter2019} for $W$-neighborhood covers in CONGEST. But the result of \cite{parter2019} only focuses on unweighted graph, and would take $O(W \log(n))$ rounds. 

\begin{theorem}\label{thm:RoundCovers}
Given a weighted graph $G=(V,E,w)$, there is an algorithm that constructs an $\ell$-limited $W$-pairwise cover in $O(\ell \log^2(n) \log(W))$ rounds in the CONGEST model, with high probability. Moreover, a pairwise cover for paths with $\ell$-hops with length in $[W,2W]$ can be constructed in $O(\ell \log^2(n))$ rounds with high probability.
\end{theorem}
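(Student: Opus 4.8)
The plan is to prove the ``moreover'' statement first --- a cover for a single distance band $[W,2W]$ in $O(\ell\log^2 n)$ rounds --- and then obtain the full $\ell$-limited $W$-pairwise cover by running the single-band construction for the $O(\log W)$ geometric scales $w\in\{1,2,4,\dots,2^{\lceil\log(2W)\rceil}\}$ and taking the union of all clusters produced. Any path of weight at most $W$ with at most $\ell$ hops falls in some band $[w,2w]$, and a cluster coming from band $w$ will have $w$-diameter $O(w\log n)=O(W\log n)$ and contribute $O(n\log n)$ to $\sum_C|C|$ and $O(m\log n)$ to $\sum_C E(C)$ (each scale is run $O(\log n)$ times for high probability, and each run is a vertex partition, so every edge/vertex appears in at most one cluster per run). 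Summing over the $O(\log W)$ bands yields all three defining properties of an $\ell$-limited $W$-pairwise cover with the claimed $\tilde O(\cdot)$ bounds, and the round counts add up to $O(\ell\log^2 n\cdot\log W)$. Note that, crucially, the exploration radius below will depend on $\ell$ and not on $w$, so the per-band cost does not grow with the scale.

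For a single band $[w,2w]$, I would replace $G$ by the rounded graph $\hat G$ of Lemma \ref{lem:rounding} with $R=w$ and a constant accuracy parameter $\epsilon_0$, so $\eta=\epsilon_0 w/\ell$ and every path with at most $\ell$ hops and weight at most $2w$ has $\hat w$-length $O(\ell/\epsilon_0)$, while $w(e)\le\eta\,\hat w(e)\le(1+\epsilon_0)w(e)$ edgewise. Then I would run $\mathrm{LDD}(\alpha)$ on $\hat G$ with $\alpha=\Theta(\epsilon_0/\ell)$ chosen so that the padding bound of Lemma \ref{MPX_padded} gives $\exp(-2\cdot O(\ell/\epsilon_0)\cdot\alpha)=\Omega(1)$. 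Hence for each node $u$, with constant probability some cluster contains the $\hat w$-ball of radius $O(\ell/\epsilon_0)$ around $u$, and therefore every path with endpoint $u$, at most $\ell$ hops, and $w$-weight at most $w$ (each vertex of such a path lies in that $\hat w$-ball, since each prefix is itself a short low-weight path). Repeating $\mathrm{LDD}(\alpha)$ $O(\log n)$ times independently and unioning the resulting partitions, a union bound over the $n$ nodes shows w.h.p.\ every such path lies in a single cluster. The $\hat w$-diameter of each $\mathrm{LDD}(\alpha)$ cluster is $O(\alpha^{-1}\log n)=O((\ell/\epsilon_0)\log n)$ by the strong-diameter guarantee of Theorem \ref{PRAM_MPX}, so its $w$-diameter is at most $\eta\cdot O((\ell/\epsilon_0)\log n)=O(w\log n)$; rescaling recovered $\hat w$-lengths by $\eta$ gives a $(1+\epsilon_0)$-approximation to true lengths, which is what downstream uses need.

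It remains to implement one $\mathrm{LDD}(\alpha)$ run in $O(\ell\log n)$ rounds of CONGEST. The rounding is purely local (each endpoint knows its incident weights and the global parameters $w,\ell,\epsilon_0$), so $\hat G$ is available for free. The partition --- each $v$ joining $\arg\min_{x}(d_{\hat G}(v,x)-r_x)$ --- I would compute by Bellman--Ford from a virtual super-source adjacent to every $x$ with edge weight $(\max_y r_y)-r_x\ge 0$, propagating $(\text{center-id},\text{distance})$ pairs, which fit in $O(\log n)$ bits since weights are polynomial. The point that makes this fast is that the exploration can be truncated after $O(\alpha^{-1}\log n)=O((\ell/\epsilon_0)\log n)$ rounds: since $r_x\sim\exp(\alpha)$ we have $\max_x r_x=O(\alpha^{-1}\log n)$ w.h.p., and if $x^\star$ is $v$'s true nearest shifted center then $d_{\hat G}(v,x^\star)-r_{x^\star}\le -r_v\le 0$ forces $d_{\hat G}(v,x^\star)\le r_{x^\star}\le\max_y r_y$; because every $\hat w$-edge weighs at least $1$, the shortest $v$--$x^\star$ path in $\hat G$ uses $O(\alpha^{-1}\log n)$ hops and is thus found within that many Bellman--Ford rounds. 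With $\epsilon_0$ constant this is $O(\ell\log n)$ rounds per run and $O(\ell\log^2 n)$ over the $O(\log n)$ runs.

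I expect the main obstacle to be exactly this last step: showing that truncating the $\mathrm{LDD}$ exploration at $O(\ell\log n)$ hops still reproduces the PRAM partition (every node assigned to the same center it would get in the untruncated execution) \emph{and} still covers every relevant path --- both relying on the rounding of Lemma \ref{lem:rounding} to force all ``relevant'' $\hat w$-distances down to $O(\ell/\epsilon_0)$. The geometric scaling loop, the $O(\log n)$-fold amplification, and the diameter/size accounting are then routine, and I will not carry out the constants here.
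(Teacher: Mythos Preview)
Your proposal is correct and follows essentially the same approach as the paper: round to $\hat G$ via Lemma~\ref{lem:rounding}, run $\mathrm{LDD}(\Theta(1/\ell))$ on $\hat G$ for $O(\log n)$ independent repetitions using the padding property (Lemma~\ref{MPX_padded}) to cover each $\ell$-hop path with constant probability, bound each run by $O(\ell\log n)$ CONGEST rounds via the w.h.p.\ bound $\max_v r_v=O(\ell\log n)$, and iterate over $O(\log W)$ scales. Your virtual-super-source Bellman--Ford formulation and the explicit use of $\hat w(e)\ge 1$ to translate the distance bound into a hop bound are cleaner than the paper's forwarding description, and you spell out the ball-to-path containment (via prefixes) that the paper leaves implicit, but the underlying argument is the same.
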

\begin{proof} 
As we argued by using the rounding technique of \cite{klein1997}, for any pair of nodes $u,v$ such that $d^{\ell}(u,v) \in [W, 2W)$ we can restrict our attention to another graph $\hat{G}$ with rounded weights. We construct a pairwise cover on $\hat{G}$ by running the LDD$(\alpha)$, $\alpha= \epsilon_0/2\ell=\Theta(1/\ell)$ algorithm $O(\log(n))$ times independently.

 We argue that each run of LDD$(O(1/\ell))$ takes $O(\ell)$ rounds in the CONGEST model. First we observe that each node $u$ only needs to broadcasts the value $r_u$ to all the nodes within its $r_u$ neighborhood, since  a node $x$ will not join the cluster of $u$ if $r_u-d(u,x) <0$. We can now use a simple induction to prove the claim. In each round, each node $u$ will forward the radius and distances corresponding to the node $u_{\max}$ that maximizes $r_{u_{\max}} -d(u,u_{\max})$ among over all the messages that $u$ has received. We now argue that each node $u$ will receive the message from the node $c =\arg \max_{v \in V} r_v -d(u,v)$ in $r_c$ rounds. Consider any path $\pi =\{c=u_0,u_1,...,u_{j}=u\}$, where $j \leq r_c$. If $j=1$ then in one single round $c$ sends $(r_c,w(c,u_1))$ to $u_1$. Assume now that $u_i$ receives the message $(r_c, d(u,u_{i-1}))$ in round $i$. Then $u_i$ will compute $d(u,u_i)$ (after receiving distance estimates from all neighbors), and forwards $(r_c,d(c,u_i))$ to all neighbors including $u_{i+1}$. Therefore in round $i=j \leq r_c$, $u_j$ has received the message $r_c-d(u,c)$, and can compute $d(u,c)$. Therefore this algorithm will terminate 
after $\max_{v \in V} r_v$ rounds. Since $r_u$ is an exponential random variable with parameter $O(1/\ell)$, we know that maximum of these $O(n)$ exponential random variables is $O(\ell \log(n))$ with high probability.
Now we need to repeat the partition algorithm $O(\log(n))$ times and will pipeline the broadcasts for different runs. Clearly, each node is in at most in $O(\log(n))$ clusters. A standard Chernoff bound in combination with Lemma \ref{MPX_padded} implies that with high probability after $O(\log(n))$ repetition of the $\ell$-limited LDD$(O(1/\ell))$ algorithm, for each path $\pi$ with at most $\ell$ hops and length $w(\pi) \in [R, 2R]$, there will be a cluster $C$ such that $\pi \subseteq C$.
We then repeat this process for $O(\log(W))$ distance scales to get an $\ell$-limited $W$-pairwise cover.
\end{proof}
As we will see, since in our hopset construction we consider different distance scales and need to compute pairwise covers for a fixed scale, this step takes only $O(\ell \log(n))$ rounds.

\subparagraph*{Diameter guarantee.} For constructing pairwise covers, we need the diameter guarantee of $O(W \log(n))$ for all clusters. While running LDD$(O(1/\ell))$ gives a diameter guarantee of $O(\ell \log(n))$ on $\hat{G}$, we note that the construction ensures that clusters have diameter $O(W \log(n))$ on $G$. Since we argued that every $\ell$ hop path with length $W$ will fall into a cluster with high probability, the diameter guarantee of $O(\ell \log(n))$ on $\hat{G}$ will imply that the corresponding cluster in $G$ will have length $O(W \log(n))$. More formally, for any pair of nodes there is a path with length $O(\ell \log(n))$ in $\hat{G}$. Let $\mathcal{C}$ denote the cluster that contains this path. Consider each segment of length $O(\ell)$ in $\hat{G}$ is in $\mathcal{C}$ and will be have length $O(W)$ in $G$ (by Lemma \ref{lem:rounding}), and thus there will be a path of length $O(W\log(n))$ based on weights in $G$ in $\mathcal{C}$.

\subparagraph*{Extension to neighborhood covers.}
While Cohen shows that for the parallel construction of hopsets pairwise covers are enough, for distributed implementation we need one more property: each vertex should overlap with at most $O(\log(n))$ clusters. Moreover, the algorithm used in Theorem \ref{thm:RoundCovers} provides the stronger guarantee that there will be a cluster that contains the neighborhood of (weighted) radius $W$ from each vertex with high probability, rather than only containing paths of length $W$. 
In other words, a similar analysis shows that with high probability an $\ell$-limited neighborhood cover can be constructed in $O(\ell \log(n))$ rounds of the CONGEST model. That is, for each node $u$, the $\ell$-limited $W$-neighborhood of $u$ will be fully contained in a cluster with high probability. However, for our purposes the path covering property suffices.

\section{Congested Clique Hopset Construction.}
In this section we describe our main algorithm. Similar to the sequential construction described we consider different distance scales $[R,2R)$, and handle each scale separately. In each iteration, we construct a \textit{sparse} $\ell$-limited $2R$-neighborhood cover as described in Section {\ref{sec:covers}}. Then the clusters will be divided into small and big clusters, and each case will be handled differently. So far the construction is similar to \cite{cohen2000}.
The key new idea is that for Congested Clique, by setting the parameters carefully we can send the topology corresponding to a small cluster to the cluster center, and build a hopset locally. Here we need to use the fact that each node is in at most $O(\log(n))$ clusters, which is a property that we get from our neighborhood cover construction. We will also need to compute pairwise distances between big clusters centers. For this step, we use the algorithm of \cite{censor2019} that computes $(1+\epsilon)$-multi-source shortest path from $O(\sqrt{n})$ sources (Lemma \ref{lem:MSSP}). We note that while during our construction we construct the denser hopsets of \cite{censor2019} as auxiliary structure, these extra edges will be removed at the end of each distance scale.

Finally, we use Lemma \ref{lem:rec_hopbound} to use the hopset edges added for smaller distance scales to construct the larger distance scales. For this to give us a $(1+\epsilon)$ for an arbitrary $\epsilon$, we first let $\epsilon'$ be the error parameter. Since we use paths with error $(1+\epsilon')$ for each scale, to compute distances for the next scale, a multiplicative factor in the stretch will be added in each iteration.
This means that after $i$ iterations the error will be $(1+\epsilon')^i$. We can simply rescale the error parameter by setting $\epsilon'=O(\frac{\epsilon}{\log(n)})$ to get arbitrary error overall of $\epsilon>0$. 

Throughout our analysis w.l.o.g we assume the minimum edge weight is one. Otherwise, we can scale all the edge weights by the minimum edge weight. We also assume the aspect ratio is polynomial. Otherwise we can use reductions in previous work to reduce aspect ratio in exchange in polylogarithmic depth (this will be preprocessing step and will not dominate the overall running time).

\begin{algorithm}[h]
\caption{Congested Clique construction $(\epsilon,\beta)$-hopset of size $\tilde{O}(n^{1+\frac{1}{2k}})$}
\label{alg:main}
Let $H_i$ denote the hopset edges for scale $(2^i, 2^{i+1}]$, and set $\epsilon'=O(\frac{\epsilon}{\log(n)})$.\\
\For{$(R, 2R]$, where $R=2^\kappa, \log(\beta)\leq \kappa \leq O(\log(n))$, on $G \cup^{\kappa-1}_{i \geq \log(\beta)} H_i$}{
  Set $W=O(\epsilon' R/(\log n))$, and and build $\beta$-limited $W$-pairwise covers (by Theorem \ref{thm:RoundCovers}).\\
  Let $\mathcal{C}_b$ be the set of big clusters that have size at least $\sqrt{n}$ and $\mathcal{C}_s$ small clusters to have size less than $\sqrt{n}$.\\
\For{each $C \in \mathcal{C}_s$}{
All the nodes in cluster $C$ send their incident edges to the center.\\
The center locally computes a $(\epsilon',\beta')$-hopset of size $O(n^{1/2+1/2k})$ (construction of \cite{huang2019}) with $\beta'= O(\log(k)/\epsilon')^{\log(k+1)-1}$.\\
The center sends the new hopset edges to the corresponding nodes (endpoints) in $C$.
}
\For{each $C \in \mathcal{C}_b$}{
Add a star by adding edges from the center node of each big cluster to every other node in that cluster (limiting exploration to $\ell=2\beta+1$ hops), and set the weights based on shortest path distances.
}
Add an edge between any pair $u_1, u_2$ of centers of big clusters that are within $\ell=2\beta+1$ hops of each other, and set the weight to $d^{\ell}(u_1, u_2)$.\\
}
\end{algorithm}

An overview of the algorithm is presented in Algorithm \ref{alg:main}. By defining small clusters to have size at most $\sqrt{n}$, we have that the number of edges in each small cluster $C$ is $O(n)$, and hence all the nodes $C$ can send their incident edges to the cluster center in constant rounds using Lenzen's routing \cite{lenzen2013routing}. Then the cluster center computes a hopset with size $O(n^{1/2+1/2k})$ and hopbound $\beta_0= O(\log(k)/\epsilon')^{\log(k)-1}$ \textit{locally} using Huang-Pettie \cite{huang2019} centralized construction.  The center of a small cluster $C$ can send the edges incident to each node in that clusters. Since the size of the hopset on small clusters is always $O(n^{\frac{1}{2}+\frac{1}{2k}})=O(n)$, this can also be done in constant  time using Lenzen's routing.

 As explained in Lemma \ref{lem:rec_hopbound}, using hopset edges added for smaller scales we can limit all the shortest path explorations to $2\beta+1$. So we can add the star edges, by running $2\beta+1$ rounds of Bellman-Ford.
For adding a clique between centers of large clusters, we will use the $(1+\epsilon)$-MSSP algorithm of Censor-Hillel et al.~2019 \cite{censor2019} (using Lemma \ref{lem:MSSP}. This is possible since there are at most $O(\sqrt{n})$ nodes. We disregard all the other edges added in this step for computing these distances after the computation.
We now analyze the algorithm and show that it has the properties stated in Theorem \ref{thm:main}.
\subparagraph*{Hopbound.}  
The hopbound of hopsets constructed for small clusters, which are based on Huang-Pettie hopsets is $\beta'=O((\log(k)/\epsilon')^{\log(k+1)-1})$. The path between a pair of nodes in each distance scale has $O(\log(n)/\epsilon')$ segments. The properties of a neighborhood cover imply that each of these segments are w.h.p. contained in one cluster. Each such segment has a corresponding path with hopbound either $\beta'$ (if it is contained in a small clusters), or two edge corresponding to the stars edges (if there is only one big cluster). 
If there are more than one big cluster centers, there is a single edge between the furthest big cluster centers on the path.
Hence in the worst case all segments correspond to small clusters and will have a corresponding path of length $\beta'$. Therefore the overall hopbound is $O(\frac{\log(n)}{\epsilon'}(\frac{\log(k)}{\epsilon'})^{\log(k+1)-1}))=O(\frac{\log^2(n)}{\epsilon}(\frac{\log(n)\log(k)}{\epsilon})^{\log(k+1)-1})$.

\subparagraph*{Size.}
Recall that large clusters have size at least $\sqrt{n}$. The stars added for each big cluster will add $O(n \log^2(n))$ edges overall since they are consisted of unions of $O(\log(n))$ forests for each scale. The (clique) edges added between centers of big clusters will add $O(n)$ edges overall. For small clusters of size $s = O(\sqrt{n})$, we added a hopset of size $s^{1+1/k}$ (this is the guarantee we get by using Huang-Pettie \cite{huang2019} hopsets), for a parameter $k \geq 2$. On the other hand, we have at most $O(\frac{n}{s})$ clusters of size within $[s, 2s]$. Therefore we can estimate the overall number of edges added for these small clusters in each scale by summing over different values $s \in [2^r, 2^{r+1})$ for small clusters as follows:

\[\sum_{s \in [\sqrt{n}]} O(\frac{n}{s}\cdot s^{1+1/k})=\sum_{r=1}^{\log (\sqrt{n})} O(\frac{n}{2^r} \cdot (2^r)^{1+1/k}) = \sum_{r=1}^{\log (\sqrt{n})} O(n \cdot 2^{\frac{r}{k}})= O(n^{1+\frac{1}{2k}}). \]

Therefore, the overall size for all scales is $O(n^{1+\frac{1}{2k}} \log(n)+n\log^2(n))$. 

\subparagraph*{Stretch.} 
 Fix a distance scale $(R, 2R], R=2^k$ and consider a pair of nodes  $u,v \in V$ where $d(u,v) \in (R,2R]$.  If $R \leq \log \beta$, since we assumed the minimum edge weight is one, this implies that the shortest path has at most $O(\beta)$ hops and no more edges is needed for this pair.
Otherwise, let $\pi$ be the shortest path between $u$ and $v$. We look at three different cases and show $\tilde{d}(u,v)= (1+\epsilon') d(u,v) + O(W \log n)$.

First consider the case where all the clusters on the shortest path between $u$ and $v$ are small clusters. In this case, we have replaced each segment of length $W$ with a path of stretch $(1+\epsilon')$. By triangle inequality, overall we get a $(1+\epsilon')$-stretch.
  Next, consider a case where there is a single large cluster on this path. The segment corresponding to this single cluster will just add a single additive $W \log n = \epsilon' R$ cost to our distance estimate.

Final case is when there are more than one large clusters. Consider the two furthest large clusters (based on their centers) on $\pi$, and let their centers be $x$ and $y$. We have added one single edge within $(1+\epsilon')$-stretch of $d(x,y)$ that covers the whole segment between these two centers. Therefore, we have shown that all segments of $\pi$ have a corresponding path within $(1+\epsilon')$-stretch. As argued, this implies that each scale incurs a multiplicative factor of $(1+\epsilon')$ in the stretch, and thus by setting $\epsilon'=O(\epsilon/\log(n))$, and since we assumed that the weights are polynomial we can get $(1+\epsilon)$-stretch for all scales.

\subparagraph*{Round Complexity.} For each of the $O(\log(n))$ distance scales, it takes $O(\beta \log^2(n))$ rounds to compute $2\beta+1$-limited neighborhood covers (Lemma \ref{thm:RoundCovers}). Once the covers are constructed for small clusters we need to run a Bellman-Ford with $O(\beta)$ hops from the center of each big cluster and since each node may overlap with at most $O(\log(n))$ clusters this phase takes $O(\beta \log(n))$ (each node can pipeline the computation over the clusters it overlaps with). For small clusters, we argued that in $O(1)$ rounds (using Lenzen's routing) the whole small cluster topology can be sent to the cluster center, and after local computation another $O(1)$ rounds will be enough for cluster center to send back the new hopset edges to the destination node. Finally, using the result of \cite{censor2019} we can compute $(1+\epsilon')$-approximation from big cluster centers ($O(\sqrt{n})$ sources) in $O(\frac{\log^2(n)}{\epsilon'})= O(\frac{\log^3(n)}{\epsilon})$ time. Therefore the overall running time is $O(\beta \log^2(n))$.

\subparagraph*{Application to multi-source queries.} We can now combine our hopset construction with Lemma \ref{lem:MSSP} (source detection algorithm of \cite{censor2019}) to show that we can compute queries from $O(\sqrt{n})$ sources in $O(\beta)$ time, by maintaining a sparse hopset of size $\tilde{O}(n^{1+ \frac{1}{2k}})$, while \cite{censor2019} has to store a hopset of size $\tilde{O}(n^{3/2})$. Corollary \ref{cor:mssp} follows from this observation.

\section{Massively Parallel Hopsets and MSSP} \label{sec:MPC}
In this section, we argue that in a variation of the MPC model where the overall memory is $O(mn^{1/2})$ we can construct hopsets with small hopbound efficiently, and this in turn gives us fast algorithm for multi-source shortest path in this case. This result relies on an observation made in \cite{dinitz2019}, stating that the PRAM hopset constructions (e.g. \cite{cohen2000}, \cite{elkin2016}) that use $O(m\alpha)$ processors with depth $t$ can be implemented in MPC, even when the memory per machine is strictly sublinear, in $O(t)$ rounds if we assume that the overall memory available is $O(m\alpha)$. Once a $(\beta, \epsilon)$-hopset is constructed, the Bellman-Ford subroutine described in \cite{dinitz2019} can be used to compute $(1+\epsilon)$-stretch distances from $O(\sqrt{n})$ nodes to all other nodes in $V$.

Results of \cite{dinitz2019} are based on hopsets of \cite{elkin2016}, and their constructions may use less overall memory in general, but they get a worse hopbound than ours in the special case that the total memory is $\tilde{\Omega}(m\sqrt{n})$. In this case, we get an improved hopbound of $O(\log^2(n)/\epsilon)$, whereas their result gives a hopbound of $O((\log(n)/\epsilon)^{3})$.
 In particular, we use the PRAM hopset construction of \cite{cohen2000} (instead of \cite{elkin2016}), which can be simulated in the MPC model with strictly sublinear memory per machine (using a reduction of \cite{goodrich2011}) to construct hopsets with hopbound $O(\log^2(n)/\epsilon)$. The only difference between our construction and \cite{cohen2000} is using a faster algorithm for constructing $\ell$-limited pairwise covers based on the algorithm of \cite{miller2015}. 
 First we note that our $\ell$-limited $W$-neighrbohood cover construction can be constructed in MPC based on a very similar algorithm and analysis as in Section \ref{sec:covers}. This step can be done only using $O(m\log^2(n))$ overall memory (or $O(m \log(n))$ memory for a single-scale) in $O(\ell \log(n))$ rounds. Observe that the construction of $W$-neighborhood covers for different scales $[W, 2W]$ can all be done in parallel with an extra logarithmic overhead in the \textit{total memory}. Similarly since each of the low diameter partitions are independent, the repetitions of the LDD algorithm can also be parallelized. This result is also implied by results in \cite{miller2015}, combined with goodrich \cite{goodrich2011}. We have,
\begin{lemma}\label{lem:MPC_covers}
There is an algorithm that runs in $O(\frac{\ell}{\gamma}\cdot \log(n))$ rounds of MPC and w.h.p.~computes an $\ell$-limited $W$-neighborhood cover, where memory per machines is $O(n^{\gamma}), 0<\gamma \leq 1$ and the overall memory is $O(m \log^2(n))$.
\end{lemma}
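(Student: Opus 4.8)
The plan is to run the \emph{same} construction as in Theorem~\ref{thm:RoundCovers} --- Klein rounding (Lemma~\ref{lem:rounding}) followed by repeated low-diameter decompositions --- and to argue only that each step is implementable in MPC within the claimed budgets; the structural guarantees (strong diameter $O(W\log n)$ in $G$, each vertex in $O(\log n)$ clusters, and $B^\ell(u,W)$ contained in one cluster w.h.p.) then carry over verbatim from Section~\ref{sec:covers}, since the algorithm itself is unchanged. Concretely, I would first apply the rounding of Lemma~\ref{lem:rounding} to obtain, for the scale at hand, a graph $\hat{G}$ on which every $\ell$-hop path of original length $O(W)$ has rounded length $O(\ell)$, so that it suffices to run $\mathrm{LDD}(\alpha)$ with $\alpha=\Theta(1/\ell)$ on $\hat{G}$ and repeat it $O(\log n)$ times independently; Lemma~\ref{MPX_padded} together with a Chernoff bound yields the neighborhood-covering and low-overlap properties w.h.p.

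Next I would implement one call to $\mathrm{LDD}(\Theta(1/\ell))$ in MPC. Each vertex $u$ samples $r_u\sim\exp(\Theta(1/\ell))$, and w.h.p.\ $\max_u r_u=O(\ell\log n)$, so the assignment $v\mapsto\arg\min_x\big(d_{\hat{G}}(v,x)-r_x\big)$ only inspects paths of $O(\ell\log n)$ hops; this is exactly a min-plus Bellman--Ford propagation of $O(\ell\log n)$ iterations, each iteration relaxing, at every vertex, a single value over its incident edges --- a PRAM computation of work $O(m)$ and depth $O(\ell\log n)$ (the bound of Theorem~\ref{PRAM_MPX} with $\alpha=\Theta(1/\ell)$). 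By the PRAM-to-MPC simulation of Goodrich et al.~\cite{goodrich2011} (the observation also invoked in \cite{dinitz2019} and used elsewhere in this section), such a computation runs on machines of memory $O(n^\gamma)$ in $O\!\big(\ell\log n\cdot\lceil\log_{n^\gamma} m\rceil\big)=O(\frac{\ell\log n}{\gamma})$ MPC rounds, provided the total memory is $\Omega(m)$ --- enough to hold the (rounded) edge set and the $O(n)$ distance/cluster labels. Equivalently, each Bellman--Ford relaxation can be realized directly by an MPC sorting/aggregation step, which costs $O(1/\gamma)$ rounds under strictly sublinear memory; this handles one scale with $O(m\log n)$ total memory, the extra $\log n$ coming from the independent $\mathrm{LDD}$ repetitions.

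Finally I would run the $O(\log n)$ repetitions, and (as needed in the hopset application) the $O(\log n)$ distance scales, \emph{concurrently}: all these instances are mutually independent, so parallelizing them multiplies the total memory by $O(\log^2 n)$, yielding $O(m\log^2 n)$, while leaving the round count at $O(\frac{\ell\log n}{\gamma})$. The output cover is the union of the clusters produced by the retained runs, and its correctness is precisely the analysis of Section~\ref{sec:covers}. The step I expect to be the main obstacle is the sublinear memory per machine: a vertex's adjacency list need not fit on a single machine, so even a single Bellman--Ford relaxation must be expressed through MPC sorting/aggregation primitives, and one has to check that the processor count and total-memory requirements of the Goodrich simulation are met (they are: work $O(m)$ per step, total memory $\Omega(m)$) and that running all scales and repetitions together does not push the total memory past $O(m\log^2 n)$.
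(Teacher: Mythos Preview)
Your proposal is correct and follows essentially the same approach as the paper. The paper's justification (the paragraph preceding Lemma~\ref{lem:MPC_covers}) simply asserts that the construction of Section~\ref{sec:covers} carries over to MPC, appeals to \cite{miller2015} together with the Goodrich simulation \cite{goodrich2011}, and notes that the $O(\log n)$ LDD repetitions and the $O(\log n)$ distance scales can be parallelized at the cost of the $\log^2 n$ factor in total memory; you spell out the same plan in more detail (in particular the Bellman--Ford/aggregation implementation of a single LDD round and the $1/\gamma$ overhead from the simulation), but the underlying argument is identical.
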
 
\subparagraph*{$(1+\epsilon)$-MSSP.}
Given a pairwise cover, assuming that in MPC we have $\tilde{O}(mn^{1/2})$ total memory, we can construct a $(\log^2(n)/\epsilon, \epsilon )$-hopset of size $O(n^{3/2} \log(n))$.
 This hopset is a special case of hopsets of \cite{cohen2000}: we add a clique for small clusters, a star centerd at each big cluster, and a clique between big cluster centers. As stated, the main difference in our algorithm is that we use the algorithm of Lemma \ref{lem:MPC_covers} for constructing pairwise covers, rather than the algorithm of \cite{cohen1998}. This leads to a construction time of $O(\beta \log(n))$, whereas a direct reduction from \cite{cohen2000} would have construction time of $O(\beta \log^3(n))$, which is how long it takes to construct their limited pairwise covers. Hence combining Lemma \ref{lem:MPC_covers} with simulating the PRAM construction of \cite{cohen2000}, and the Bellman-Ford primitives described in \cite{dinitz2019}, we can construct a hopset of size $O(n^{3/2} \log(n))$ in $O(\beta \log(n))$ time with hopbound $\beta=O(\log^2(n)/\epsilon)$.
\begin{theorem}
Given an undirected weighted graph $G$, and parameters $\epsilon>0, 0 < \gamma \leq 1$, we can w.h.p.~construct an $(\beta, \epsilon)$-hopset of size $O(n^{3/2}\log(n))$ in  $O(\frac{\log^3(n)}{\gamma\epsilon})$ rounds of MPC, using $O(n^\gamma)$ memory per machine, and the overall memory of $O(mn^{1/2})$ (i.e. there are $O(mn^{1/2-\gamma})$ machines), where hopbound is $\beta=O(\frac{\log^2(n)}{\gamma\epsilon})$.
\end{theorem}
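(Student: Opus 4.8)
The plan is to instantiate the PRAM hopset construction of Cohen~\cite{cohen2000} in its special case --- a clique on each cluster of size $<\sqrt n$, a star centered at each cluster of size $\ge\sqrt n$, and a clique on the $O(\sqrt n)$ centers of the big clusters --- and to simulate it in the strictly-sublinear-memory MPC model via the simulation observation of~\cite{dinitz2019}, with the single substitution that the $\ell$-limited pairwise cover subroutine of~\cite{cohen1998} used by~\cite{cohen2000} is replaced by the faster MPC construction of Lemma~\ref{lem:MPC_covers}. Concretely: set $\epsilon'=\Theta(\epsilon/\log n)$; process distance scales $(R,2R]$, $R=2^\kappa$, in increasing order; for each scale set $W=\Theta(\epsilon' R/\log n)$, build an $O(\beta)$-limited $W$-neighborhood cover on $G$ together with the hopset edges already committed for smaller scales (so that, by Lemma~\ref{lem:rec_hopbound}, capping explorations at $2\beta+1$ hops costs only a $(1+\epsilon')$ factor), split the clusters at the $\sqrt n$ threshold, add the clique / star / center-clique edges, and weight each new edge by the $(2\beta+1)$-hop-limited distance between its endpoints. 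Since Lemma~\ref{lem:MSSP} is Congested-Clique-specific, in MPC these distances --- in particular the pairwise distances among the $O(\sqrt n)$ big-cluster centers --- are obtained by running $O(\sqrt n)$ Bellman--Ford instances in parallel, one $O(m)$-word copy of the augmented graph per source, which is the step that consumes $\tilde O(mn^{1/2})$ total memory.

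Correctness is then read off from~\cite{cohen2000} together with the cover guarantees. \emph{Size:} for a single distance band a cover satisfies $\sum_{C}|C|=\tilde O(n)$, so the small-cluster cliques contribute $\sum_C|C|^2\le \sqrt n\sum_C|C|=\tilde O(n^{3/2})$ edges, the big-cluster stars contribute $\tilde O(n)$, and the clique on the $O(\sqrt n)$ centers contributes $O(n)$; over the $O(\log n)$ scales this totals $O(n^{3/2}\log n)$. \emph{Stretch:} a shortest path in a fixed scale is cut into $O(\log n/\epsilon')$ segments of weight at most $W$, each contained in some cluster, and each is replaced by a $(1+\epsilon')$-approximate shortcut (one clique edge, two star edges, or one center-clique edge, the last with an additive $O(W\log n)=O(\epsilon' R)$ term absorbed into the multiplicative factor); the triangle inequality gives stretch $(1+O(\epsilon'))$ per scale, and with polynomial aspect ratio the choice $\epsilon'=\Theta(\epsilon/\log n)$ yields $(1+\epsilon)$ overall. \emph{Hopbound:} each of the $O(\log n/\epsilon')$ segments costs $O(1)$ hops, so a single scale has hopbound $O(\log n/\epsilon')=O(\log^2 n/\epsilon)$, and Lemma~\ref{lem:rec_hopbound} keeps the final hopbound at this $\beta$.

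For the resource bounds I would invoke the principle from~\cite{dinitz2019} (itself Goodrich's~\cite{goodrich2011} simulation of one PRAM step by sorting): a PRAM algorithm with $O(m\alpha)$ processors and depth $t$ runs in $O(t/\gamma)$ rounds of MPC with $O(n^\gamma)$ memory per machine and $\tilde O(m\alpha)$ total memory. Here $\alpha=O(\sqrt n)$ --- we form $O(n^{3/2})$ edges and run $O(\sqrt n)$ parallel $O(m)$-space Bellman--Ford instances --- so the total memory is $O(mn^{1/2})$ and there are $\Theta(mn^{1/2-\gamma})$ machines. For the depth: by Lemma~\ref{lem:MPC_covers} all of the $O(\beta)$-limited cover work runs in $O(\frac{\beta}{\gamma}\log n)$ MPC rounds using only $\tilde O(m)$ total memory (the $O(\log n)$ internal LDD repetitions parallelized), and the $O(\log n)$ sequential scales of $(2\beta+1)$-hop Bellman--Ford add $O(\beta\log n)$; with $\beta=O(\log^2 n/\epsilon)$ this totals $O(\frac{\log^3 n}{\gamma\epsilon})$ rounds, improving on the $O(\frac{\beta\log^3 n}{\gamma})$ one would get from a black-box reduction using the cover of~\cite{cohen1998}.

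The step I expect to be the main obstacle is not a new algorithmic idea but the memory accounting at every intermediate step. One must verify that (i) shipping the $O(|C|^2)$-edge topology of each small cluster to one machine to build its clique fits the budget --- it does, since $\sum_C|C|^2=\tilde O(n^{3/2})$ and each node lies in $O(\log n)$ clusters; (ii) the $O(\sqrt n)$ parallel graph copies used for the Bellman--Ford distance estimates, the committed hopset edges, and the cover sub-instances coexist within $\tilde O(mn^{1/2})$; and (iii) the factor-$1/\gamma$ overhead of the sublinear-memory simulation is charged once per PRAM step, so it multiplies the depth and does not compound across the $O(\log n)$ sequential scales (which is fine because depth is additive). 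One must also check that the polylogarithmic factors line up with the PRAM bound of~\cite{cohen2000} --- in particular that building the $O(\beta)$-limited cover on the \emph{augmented} graph still runs within the claimed round count --- after which everything else is a transcription of the arguments behind Theorem~\ref{thm:main} and Theorem~\ref{thm:RoundCovers}.
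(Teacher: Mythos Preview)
Your proposal is correct and follows essentially the same route as the paper: simulate Cohen's~\cite{cohen2000} PRAM hopset (clique on small clusters, star on big clusters, clique on big-cluster centers) in sublinear-memory MPC via the Goodrich/\cite{dinitz2019} reduction, swapping in the faster cover of Lemma~\ref{lem:MPC_covers}, and using the $\tilde O(m\sqrt n)$ total memory to run $O(\sqrt n)$ Bellman--Ford instances in parallel for the center--center and intra-cluster distances. The paper itself gives no proof beyond ``the analysis is very similar to the arguments in previous sections and previous work,'' so your write-up is in fact more detailed; one small remark is that your step~(i) about ``shipping the $O(|C|^2)$-edge topology to one machine'' is the Congested Clique mechanism, whereas in the MPC simulation the small-cluster cliques are obtained by the parallel Bellman--Ford you already budget for, not by routing to a single machine.
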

The analysis is very similar to the arguments in previous sections and previous work. Similarly, for $(1+\epsilon)$-MSSP we get, 
\begin{theorem}
Given an undirected weighted graph $G$, after a preprocessing step of $O(\frac{\log^3(n)}{\gamma\epsilon})$ rounds, we can w.h.p.~compute $(1+\epsilon)$-multi source shortest path queries from $O(n^{1/2})$ sources in $O(\frac{\log^2(n)}{\gamma\epsilon})$ rounds of MPC, when the memory per machine is $O(n^\gamma), 0< \gamma \leq 1$, and the overall memory required for preprocessing is $O(mn^{1/2})$.
\end{theorem}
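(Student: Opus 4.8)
The plan is to treat the statement as a black-box composition: preprocessing is the hopset construction of the preceding theorem, and a query is a hop-bounded Bellman--Ford run on the hopset-augmented graph. First I would invoke the preceding theorem to build, in $O(\log^3(n)/(\gamma\epsilon))$ MPC rounds using $O(n^\gamma)$ memory per machine and $\tilde{O}(mn^{1/2})$ total memory, a $(\beta,\epsilon)$-hopset $H$ of size $O(n^{3/2}\log n)$ with the Cohen-style hopbound $\beta = O(\log^2(n)/\epsilon)$; this is exactly the preprocessing step in the statement. For a query with source set $S$, $|S| = O(\sqrt{n})$, I would then run $\beta$ rounds of $\beta$-hop Bellman--Ford from every vertex of $S$ in parallel on $G \cup H$. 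Correctness is immediate from the definition of a $(\beta,\epsilon)$-hopset: for every $s \in S$ and $v \in V$ there is an $s$--$v$ path with at most $\beta$ hops in $G \cup H$ of length at most $(1+\epsilon)\,d_G(s,v)$, so after $\beta$ relaxation rounds each vertex stores, for each $s \in S$, a value in $[\,d_G(s,v),\,(1+\epsilon)\,d_G(s,v)\,]$. No per-scale error rescaling is needed at query time, since $H$ is already a $(1+\epsilon)$-hopset.

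For the MPC cost of the query phase I would reuse the Bellman--Ford primitive of \cite{dinitz2019}: given the $\tilde{O}(mn^{1/2})$-total-memory hopset, one relaxation round maintaining a length-$|S|$ distance vector at each vertex is implemented by a constant number of sort-and-aggregate steps over a flat list of ``edge $\times$ source'' tokens of size $\tilde{O}(mn^{1/2})$, and with $O(n^\gamma)$ memory per machine each such sort costs $O(1/\gamma)$ MPC rounds by the sorting simulation of \cite{goodrich2011}. Hence the query phase costs $O(\beta/\gamma) = O(\log^2(n)/(\gamma\epsilon))$ rounds, matching the claimed bound. The $(1+\epsilon)$-stretch guarantee across the $O(\log n)$ distance scales, as well as the high-probability claims, are identical to those in the preceding MPC hopset theorem and Lemma~\ref{lem:MPC_covers}, so I would simply point to them rather than repeat the calculation.

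The step I expect to be the main obstacle is the memory bookkeeping inside the Bellman--Ford primitive: one has to verify that carrying $\Theta(\sqrt{n})$ tentative labels along each of the $\Theta(n^{3/2})$ hopset edges never exceeds the $\tilde{O}(mn^{1/2})$ total budget, and --- more delicately --- that the $O(n^\gamma)$ per-machine bound is respected in every intermediate sort, i.e.\ that no single high-degree vertex's adjacency list together with its label vector is ever forced onto one machine. This is precisely the point handled in \cite{dinitz2019}: the relaxation data is kept as an unstructured list and processed by sorting, never materialized in per-vertex blocks, so the only objects larger than $O(n^\gamma)$ that would otherwise arise are avoided. The one place that still deserves an explicit check is the sparsest regime $m = o(n^{3/2})$, where $|E(G\cup H)|$ is dominated by the $\Theta(n^{3/2})$ hopset edges and the token count is governed by $|E(G\cup H)|\cdot|S|$; there one confirms this still lies within the total-memory allowance assumed by the model. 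Modulo this bookkeeping, the theorem is a direct composition of the preceding hopset theorem with the Bellman--Ford routine of \cite{dinitz2019}.
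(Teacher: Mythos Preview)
Your proposal is correct and follows essentially the same approach as the paper: invoke the preceding MPC hopset theorem for preprocessing, then answer a query by running $\beta$ rounds of parallel Bellman--Ford from the $O(\sqrt{n})$ sources on $G\cup H$ using the primitives of \cite{dinitz2019}. The paper's own argument is in fact terser than yours---it simply allocates a block of $O(\deg(u)\cdot n^{1/2})$ memory per node $u$, appeals to the aggregation primitives of \cite{dinitz2019}, and concludes---so your extra bookkeeping discussion (sorting-based relaxations, the $m=o(n^{3/2})$ regime) goes beyond what the paper spells out, but the underlying route is the same.
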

At a high-level since we have overall memory of $\tilde{O}(mn^{1/2})$, to each node $u$, we can assign a block of memory of size $O(\deg(u). n^{1/2})$. Then using aggregations primitives (e.g. see \cite{dinitz2019}), we can store and update the distances from up to $O(n^{1/2})$ sources. Therefore given a hopset with hopbound $O(\log^2(n)/\epsilon)$, we can compute distances from $O(n^{1/2})$ sources by running parallel Bellman-Ford. 
\small{\bibliography{CongestedClique_Hopsets}}

\end{document}